\definecolor{literal}{HTML}{e1d5e7}
\definecolor{proxy}{HTML}{d5e8d4}
\newcommand{\ind}{\perp\!\!\!\!\!\!\perp}
\def\p1{o_{\mathbb{P}}(1)}
\def\Op1{O_{\mathbb{P}}(1)}
\newcommand*\circled[1]{\tikz[baseline=(char.base)]{
            \node[shape=circle,draw,inner sep=1pt] (char) {#1};}}
\DeclareRobustCommand\sqr[1]{\tikz{\node[draw=white,fill=#1, fill opacity = 1,rectangle,minimum
width=0.25cm,minimum height=0.25cm,inner sep=0pt] at (0,0) {};}}
\definecolor{literal}{HTML}{e1d5e7}
\definecolor{proxy}{HTML}{d5e8d4}
\theoremstyle{plain} 
\newtheorem{theorem}{Theorem}
\newtheorem{proposition}{Proposition}
\newtheorem*{proposition*}{Proposition}
\newtheorem{lemma}{Lemma}
\newtheorem*{lemma*}{Lemma}
\newtheorem*{claim*}{Claim}
\newtheorem{definition}{Definition}
\newtheorem*{definition*}{Definition}
\newtheorem{assumption}{Assumption}
\newtheorem*{question*}{Question}
\newtheorem*{conjecture*}{Conjecture}
\theoremstyle{definition} 
\newtheorem{remark}{Remark}
\newtheorem*{remark*}{Remark}
\newtheorem*{aside*}{Aside}
\newtheorem{example}{Example}
\newtheorem*{example*}{Example}
\newtheorem*{aim*}{Aim}
\newtheorem*{observation*}{Observation}
\begin{document}

\title{The CATT SATT on the MATT: semiparametric inference for sample treatment effects on the treated}

\author{Andrew Yiu$^{1}$ \\
$^1$Department of Statistics, University of Oxford \\
\texttt{andrew.yiu@stats.ox.ac.uk}
}

\maketitle

\begin{abstract}
We study variants of the average treatment effect on the treated with population parameters replaced by their sample counterparts. For each estimand, we derive the limiting distribution with respect to a semiparametric efficient estimator of the population effect and provide guidance on variance estimation. Included in our analysis is the well-known sample average treatment effect on the treated, for which we obtain some unexpected results. Unlike the ordinary sample average treatment effect, we find that the asymptotic variance for  the sample average treatment effect on the treated is point-identified and consistently estimable, but it potentially exceeds that of the population estimand. To address this shortcoming, we propose a modification that yields a new estimand---the \textit{mixed average treatment effect on the treated}---which is always estimated more precisely than both the population and sample effects. We also introduce a second new estimand that arises from an alternative interpretation of the treatment effect on the treated with which all individuals are weighted by the propensity score.
\end{abstract}

\section{Introduction} \label{sec::intro}

In observational studies with no unmeasured confounding, causal effects are identified as structured combinations of marginal and conditional population parameters. If some---or even all---of these distributional components were replaced by their sample counterparts, we might expect the modified estimand to be inferable with greater precision, having stripped away layers of uncertainty in the population distribution \citep{Imbens24}. This increase in precision could be pivotal in settings with low statistical power, e.g., treatment effects that are small relative to the standard deviation of the outcomes \citep{Athey23}. 

For the average treatment effect, a complete characterization of such sample variants was provided by \citet{Imbens04}. Any asymptotically efficient estimator of the population estimand is also consistent and asymptotically normal for the sample average treatment effect but always with a smaller or equal asymptotic variance. This asymptotic variance is not point-identified, however, because it depends on the covariance between the two potential outcomes, requiring the use of a conservative variance estimator. An intermediate estimand---introduced by \citet{Abadie02}---averages the conditional treatment effect over the observed covariates rather than the population marginal covariate distribution. It is estimated with a precision bounded above and below by those of the sample and population estimands respectively.

Our contribution is to develop a similar characterization for the average treatment effect on the treated. It transpires that this theory is much richer for two reasons. First, the problem is asymmetric in the two treatment arms, which has surprising consequences for the commonly used \textit{sample average treatment effect on the treated} \citep{Robins88, Imbens04, Hartman15, Dorie19}. Despite involving both potential outcomes together, we show that its asymptotic variance is point-identified and consistently estimable. To the best of our knowledge, this provides the first asymptotically exact confidence interval procedure for the sample average treatment effect on the treated in observational studies, complementing a related finding by \citet{Sekhon21} for randomized clinical trials. Yet we also find that the asymptotic variance can exceed that of the population effect, rendering it arguably unsuitable as a sample variant.  We introduce an attractive alternative estimand---the \textit{mixed average treatment effect on the treated}---that avoids this problem.

Second, the average treatment effect on the treated has a dual interpretation. The \textit{literal} intepretation restricts attention to the individuals on treatment \citep[e.g.][]{Abadie02, Imbens04}, whereas estimands arising from the \textit{figurative} interpretation incorporate the entire population or sample, weighted by the propensity score. The asymptotic variances for these figurative estimands account for the variability in the treatment assignment, thus protecting against the possibility of an unrepresentative treated subsample. Included in this subfamily is another new estimand---the \textit{sample weighted average treatment effect on the treated}---that shares structural similarities with the sample average treatment effect.

\section{Set-up and point estimation} \label{sec::point}

Suppose that the variables $Z_{i}$ $(i = 1,\ldots,n)$ are independent and identically distributed replicates of $Z = (Y^{1}, Y^{0}, A, X)$ drawn from an unknown distribution $\mathbb{P}$. The real-valued or binary potential outcomes $Y^{1}$ and $Y^{0}$ correspond to treatment and control respectively, and $A$ is a binary variable that indicates the realized treatment assignment. We work in an observational setting: $X \in \mathbb{R}^{d}$ is a vector of covariates deemed sufficiently rich to adjust for confounding, and we require the propensity score $\pi(x) = \mathbb{P}(A = 1\mid X=x)$ to be bounded away from 0 and 1.
\begin{assumption} \label{ass::strong_ign_pos}
    (i) (Strong ignorability) Suppose $(Y^{0}, Y^{1}) \ind A \mid X$. (ii) (Positivity) There exists a number $0 < \delta < 1$ such that $\delta < \pi(x) < 1-\delta$ with $\mathbb{P}$-probability 1.
\end{assumption}
The data observed by the statistician are $(Y_{i}, A_{i},X_{i})$ $(i = 1,\ldots,n)$, where $Y_{i} = A_{i}Y^{1}_{i} + (1-A_{i})Y_{i}^{0}$. Under Assumption \ref{ass::strong_ign_pos}, the \textit{population average treatment effect on the treated} $\psi_{patt} = \mathbb{E}(Y^{1}-Y^{0} \mid A = 1)$ is identified by $\psi_{patt} = \mathbb{E}\{Y - \mu^{0}(X) \mid A = 1\}$, where $\mu^{a}(x) = \mathbb{E}(Y^{a} \mid X = x) = \mathbb{E}(Y \mid X=x, A=a)$. It should be understood that Assumption \ref{ass::strong_ign_pos} holds for the remainder of the paper without further statement.

We introduce some convenient notation. For any measurable function $f(z)$, let $P(f) = \int f(z)\,dP(z)$. In particular, $\mathbb{P}_{n}(f) = n^{-1}\sum_{i=1}^{n}f(Z_{i})$, where $\mathbb{P}_{n}$ is the empirical measure. Furthermore, let $L_{2}(\mathbb{P})$ denote the Hilbert space of all real-valued measurable functions $h$ with $\mathbb{P}[h^{2}] < \infty$ equipped with the inner product $\langle h_{1}, h_{2}\rangle = \mathbb{P}(h_{1}h_{2})$ and norm $\Vert h \Vert_{\mathbb{P}} = \surd \mathbb{P}(h^{2})$.

The analysis in this paper is based on an asymptotically efficient estimator of $\psi_{patt}$. For concreteness, we state the construction of an estimator using an estimating equation approach similar to \citet{Kennedy15} and provide sufficient conditions on the estimation of the nuisance parameters. The efficient influence function for $\psi_{patt}$ with unknown $\pi$ \citep{Hahn98} is
\begin{equation*}
    \dot{\psi} = \frac{A-\pi(X)}{\mathbb{P}(A)\{1-\pi(X)\}}\{Y - \mu^{0}(X)\} - \frac{A\psi_{patt}}{\mathbb{P}(A)}.
\end{equation*} 
Our estimator 
\begin{equation} \label{eqn::eff_est}
    \hat{\psi} = \mathbb{P}_{n}\left[\frac{A - \hat{\pi}(X)}{\mathbb{P}_{n}(A)\{1-\hat{\pi}(X)\}}\{Y - \hat{\mu}^{0}(X)\}\right]
\end{equation}
is defined by solving the empirical average of $\dot{\psi}$ after replacing $\pi$ and $\mu^{0}$ with user-specified estimators $\hat{\pi}$ and $\hat{\mu}^{0}$.
\begin{assumption} \label{ass::patt_eff}
    (i) Both $Y^{1}$ and $Y^{0}$ are square-integrable (ii) the sequences of estimators $\hat{\pi}$ and $\hat{\mu}^{0}$ each take values in fixed $\mathbb{P}$-Donsker classes (iii) $n^{1/2} \Vert \hat{\pi} - \pi \Vert_{\mathbb{P}} \Vert \hat{\mu}^{0} - \mu^{0} \Vert_{\mathbb{P}} = o_{\mathbb{P}}(1)$ (iv) there exist fixed positive constants $\varepsilon, C$ such that $\varepsilon < \hat{\pi} < 1-\varepsilon$ and $|Y - \hat{\mu}^{0}| < C$ with $\mathbb{P}$-probability 1.
\end{assumption}

\begin{remark}
    The Donsker condition in Assumption \ref{ass::patt_eff}(ii)---like the subsequent empirical process conditions in the paper---can be dropped if the construction of $\hat{\psi}$ is modified with \textit{sample-splitting} and \textit{cross-fitting} \citep[e.g.][]{Chernozhukov18, Hines22}. Assumption \ref{ass::patt_eff}(iii) is a \textit{rate double robustness} condition \citep{Rotnitzky21} that requires the combined convergence rate of $(\hat{\pi},\hat{\mu}^{0})$ to exceed $n^{-1/2}$.
\end{remark}

\begin{proposition} \label{prop::patt_eff}
    Under Assumption \ref{ass::patt_eff}, the estimator $\hat{\psi}$ in (\ref{eqn::eff_est}) admits the asymptotically linear expansion $n^{1/2}(\hat{\psi} - \psi_{patt}) = n^{1/2}\mathbb{P}_{n}(\dot{\psi}) + o_{\mathbb{P}}(1)$, and $\Vert\dot{\psi}\Vert_{\mathbb{P}} < \infty$.
\end{proposition}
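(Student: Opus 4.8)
The plan is to recognise $\hat{\psi}$ as a one-step / estimating-equation estimator constructed from the efficient influence function $\dot{\psi}$ and to run the standard three-ingredient argument, the only estimand-specific content being a second-order ``mixed-bias'' identity and the bookkeeping required to strip off the random normalisation $\mathbb{P}_{n}(A)$ in the denominator of (\ref{eqn::eff_est}). Write $\varphi_{\pi,\mu}(z) = \{a-\pi(x)\}\{y-\mu(x)\}/\{1-\pi(x)\}$, so that the numerator of (\ref{eqn::eff_est}) is $\mathbb{P}_{n}(\varphi_{\hat{\pi},\hat{\mu}^{0}})$, $\hat{\psi} = \mathbb{P}_{n}(\varphi_{\hat{\pi},\hat{\mu}^{0}})/\mathbb{P}_{n}(A)$, and $\dot{\psi} = \varphi_{\pi,\mu^{0}}/\mathbb{P}(A) - A\psi_{patt}/\mathbb{P}(A)$. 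The claim $\Vert\dot{\psi}\Vert_{\mathbb{P}}<\infty$ is immediate and uses only Assumptions~\ref{ass::strong_ign_pos} and~\ref{ass::patt_eff}(i): the potential outcomes, hence $Y-\mu^{0}(X)$, lie in $L_{2}(\mathbb{P})$ (conditional expectation being an $L_{2}$-contraction), and this is multiplied by the bounded factor $\{A-\pi(X)\}/[\mathbb{P}(A)\{1-\pi(X)\}]$, while $A\psi_{patt}/\mathbb{P}(A)$ is bounded.

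For the expansion I would first record, by conditioning on $X$, substituting $Y = AY^{1}+(1-A)Y^{0}$ and invoking Assumption~\ref{ass::strong_ign_pos}(i), the identity $\mathbb{P}(\varphi_{\pi,\mu^{0}}) = \mathbb{E}[\pi(X)\{\mu^{1}(X)-\mu^{0}(X)\}] = \mathbb{P}(A)\psi_{patt}$, which in particular gives $\mathbb{P}(\dot{\psi})=0$. The crux is the same computation with the estimated nuisances substituted as fixed functions (legitimate because, inside $\mathbb{P}(\varphi_{\hat{\pi},\hat{\mu}^{0}})$, the point of evaluation is an independent draw): the identities $A^{2}=A$ and ignorability collapse everything to the mixed-bias remainder
\begin{equation*}
    \mathbb{P}(\varphi_{\hat{\pi},\hat{\mu}^{0}}) - \mathbb{P}(A)\psi_{patt} = \mathbb{E}\!\left[\frac{\pi(X)-\hat{\pi}(X)}{1-\hat{\pi}(X)}\,\{\mu^{0}(X)-\hat{\mu}^{0}(X)\}\right],
\end{equation*}
so that Cauchy--Schwarz together with $1-\hat{\pi}>\varepsilon$ from Assumption~\ref{ass::patt_eff}(iv) bounds it by $\varepsilon^{-1}\Vert\hat{\pi}-\pi\Vert_{\mathbb{P}}\Vert\hat{\mu}^{0}-\mu^{0}\Vert_{\mathbb{P}} = o_{\mathbb{P}}(n^{-1/2})$ by Assumption~\ref{ass::patt_eff}(iii). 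This verification is the main obstacle in the sense that it is the only place where the structure of the estimand really enters; it is nonetheless routine, the content being simply that the leading remainder is a \emph{product} of the two nuisance errors (Neyman orthogonality) rather than a sum.

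Next I would write $\mathbb{P}_{n}(\varphi_{\hat{\pi},\hat{\mu}^{0}}) = \mathbb{P}(\varphi_{\hat{\pi},\hat{\mu}^{0}}) + (\mathbb{P}_{n}-\mathbb{P})(\varphi_{\pi,\mu^{0}}) + (\mathbb{P}_{n}-\mathbb{P})(\varphi_{\hat{\pi},\hat{\mu}^{0}}-\varphi_{\pi,\mu^{0}})$. The last term is $o_{\mathbb{P}}(n^{-1/2})$ by the usual asymptotic-equicontinuity lemma: $(\pi,\mu)\mapsto\varphi_{\pi,\mu}$ is Lipschitz over the ranges allowed by Assumption~\ref{ass::patt_eff}(iv), so $\{\varphi_{\pi,\mu}\}$ is $\mathbb{P}$-Donsker by the standard preservation results applied to Assumption~\ref{ass::patt_eff}(ii), while $\Vert\varphi_{\hat{\pi},\hat{\mu}^{0}}-\varphi_{\pi,\mu^{0}}\Vert_{\mathbb{P}}\lesssim\Vert\hat{\pi}-\pi\Vert_{\mathbb{P}}+\Vert\hat{\mu}^{0}-\mu^{0}\Vert_{\mathbb{P}}\cvp 0$ by the boundedness in~\ref{ass::patt_eff}(iv) and consistency of the nuisance estimators. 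Combined with the previous paragraph (and $\varphi_{\pi,\mu^{0}}\in L_{2}(\mathbb{P})$, giving $(\mathbb{P}_{n}-\mathbb{P})(\varphi_{\pi,\mu^{0}}) = O_{\mathbb{P}}(n^{-1/2})$), this yields $\mathbb{P}_{n}(\varphi_{\hat{\pi},\hat{\mu}^{0}}) = \mathbb{P}(A)\psi_{patt} + (\mathbb{P}_{n}-\mathbb{P})(\varphi_{\pi,\mu^{0}}) + o_{\mathbb{P}}(n^{-1/2})$.

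Finally, since $\mathbb{P}_{n}(A)\cvp\mathbb{P}(A)>\delta$, a one-term expansion gives $\mathbb{P}_{n}(A)^{-1} = \mathbb{P}(A)^{-1} - \mathbb{P}(A)^{-2}(\mathbb{P}_{n}-\mathbb{P})(A) + O_{\mathbb{P}}(n^{-1})$; multiplying by the expansion of $\mathbb{P}_{n}(\varphi_{\hat{\pi},\hat{\mu}^{0}})$ from the previous paragraph, subtracting $\psi_{patt}$, and discarding products of two $O_{\mathbb{P}}(n^{-1/2})$ factors leaves
\begin{equation*}
    \hat{\psi}-\psi_{patt} = (\mathbb{P}_{n}-\mathbb{P})\!\left[\frac{\varphi_{\pi,\mu^{0}}}{\mathbb{P}(A)} - \frac{A\psi_{patt}}{\mathbb{P}(A)}\right] + o_{\mathbb{P}}(n^{-1/2}) = (\mathbb{P}_{n}-\mathbb{P})(\dot{\psi}) + o_{\mathbb{P}}(n^{-1/2}),
\end{equation*}
the bracket being exactly $\dot{\psi}$ by definition. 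As $\mathbb{P}(\dot{\psi})=0$, this equals $\mathbb{P}_{n}(\dot{\psi}) + o_{\mathbb{P}}(n^{-1/2})$, which is the assertion after multiplication by $n^{1/2}$. The only care needed in this last step is to confirm that each discarded cross term is genuinely $o_{\mathbb{P}}(n^{-1/2})$, which holds because each is $O_{\mathbb{P}}(n^{-1/2})$ times either a second such factor or an $o_{\mathbb{P}}(n^{-1/2})$ one.
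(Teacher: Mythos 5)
Your proposal is correct and follows essentially the same route as the paper's proof: the same decomposition into an empirical-process term (controlled by Donsker preservation plus asymptotic equicontinuity) and a mixed-bias term (collapsing, via ignorability, to $\mathbb{P}[(\hat{\pi}-\pi)(\hat{\mu}^{0}-\mu^{0})/(1-\hat{\pi})]$ and killed by Cauchy--Schwarz and the rate condition), followed by swapping the random normalisation $\mathbb{P}_{n}(A)$ for $\mathbb{P}(A)$. The only cosmetic differences are that you Taylor-expand $\mathbb{P}_{n}(A)^{-1}$ where the paper invokes its Lemma~\ref{lem::swap}, and you bound $\Vert\dot{\psi}\Vert_{\mathbb{P}}$ directly via boundedness of the weight rather than through the variance decomposition.
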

We deduce that $n^{1/2}(\hat{\psi} - \psi_{patt})$ converges weakly to the normal distribution $\mathcal{N}(0, \Vert \dot{\psi}\Vert_{\mathbb{P}}^{2})$. The variance of $\hat{\psi}$ can be estimated by $n^{-1}\hat{V}_{patt}$, where $\hat{V}_{patt}$ is the sample variance of $\dot{\psi}$ after replacing $(\mathbb{P}(A), \pi, \mu^{0}, \psi_{patt})$ with $(\mathbb{P}_{n}(A), \hat{\pi}, \hat{\mu}^{0}, \hat{\psi})$.
\begin{proposition}
    Under Assumption \ref{ass::patt_eff}, $\hat{V}_{patt}$ converges to $\Vert \dot{\psi}\Vert_{\mathbb{P}}^{2}$ in $\mathbb{P}$-probability.
\end{proposition}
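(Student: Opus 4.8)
The plan is to reduce $\hat V_{patt}$ to an empirical second moment of the centered plug-in integrand and then show that substituting the true nuisance components for the estimated ones perturbs that moment by $o_{\mathbb{P}}(1)$. Write $\hat g(Z) = \frac{A-\hat\pi(X)}{\mathbb{P}_{n}(A)\{1-\hat\pi(X)\}}\{Y-\hat\mu^{0}(X)\} - \frac{A\hat\psi}{\mathbb{P}_{n}(A)}$ for the plug-in version of $\dot\psi$. The first point is that $\mathbb{P}_{n}(\hat g) = 0$ \emph{exactly}: the first empirical average equals $\hat\psi$ by the definition (\ref{eqn::eff_est}), while the second equals $\hat\psi\,\mathbb{P}_{n}(A)/\mathbb{P}_{n}(A) = \hat\psi$. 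Hence $\hat V_{patt} = \mathbb{P}_{n}(\hat g^{2})$ up to the factor $n/(n-1)\to 1$. Since $\dot\psi$ is an influence function, $\mathbb{P}(\dot\psi) = 0$ and $\|\dot\psi\|_{\mathbb{P}}^{2} = \mathbb{P}(\dot\psi^{2})<\infty$ by Proposition \ref{prop::patt_eff}, so the weak law of large numbers gives $\mathbb{P}_{n}(\dot\psi^{2}) \cvp \mathbb{P}(\dot\psi^{2})$. By the triangle inequality in $\|\cdot\|_{L_{2}(\mathbb{P}_{n})}$ it then suffices to prove $\mathbb{P}_{n}\{(\hat g-\dot\psi)^{2}\} = o_{\mathbb{P}}(1)$.

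Next I would isolate the leading error term. Put $f_{1} = \frac{A-\pi(X)}{1-\pi(X)}\{Y-\mu^{0}(X)\}$ and let $\hat f_{1}$ be its plug-in analogue, so $\dot\psi = \{f_{1}-A\psi_{patt}\}/\mathbb{P}(A)$, $\hat g = \{\hat f_{1}-A\hat\psi\}/\mathbb{P}_{n}(A)$, and therefore $\hat g - \dot\psi = \mathbb{P}_{n}(A)^{-1}(\hat f_{1}-f_{1}) - \mathbb{P}_{n}(A)^{-1}A(\hat\psi-\psi_{patt}) + \dot\psi\,\mathbb{P}_{n}(A)^{-1}\{\mathbb{P}(A)-\mathbb{P}_{n}(A)\}$. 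Because $\mathbb{P}(A) = \mathbb{E}\{\pi(X)\}\in(\delta,1-\delta)$ we have $\mathbb{P}_{n}(A)^{-1} = O_{\mathbb{P}}(1)$; because $\hat\psi-\psi_{patt} = O_{\mathbb{P}}(n^{-1/2})$ (from Proposition \ref{prop::patt_eff}) and $\mathbb{P}_{n}(A^{2})=\mathbb{P}_{n}(A)\le 1$, the second term has squared $\mathbb{P}_{n}$-norm $o_{\mathbb{P}}(1)$; and because $\mathbb{P}(A)-\mathbb{P}_{n}(A) = o_{\mathbb{P}}(1)$ while $\mathbb{P}_{n}(\dot\psi^{2}) = O_{\mathbb{P}}(1)$, the third term likewise has squared $\mathbb{P}_{n}$-norm $o_{\mathbb{P}}(1)$. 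The claim thus reduces to $\mathbb{P}_{n}\{(\hat f_{1}-f_{1})^{2}\} = o_{\mathbb{P}}(1)$.

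For this I would use the identity $\hat f_{1}-f_{1} = \hat r\,(\mu^{0}-\hat\mu^{0}) + (\hat r-r)(Y-\mu^{0})$, where $r = (A-\pi)/(1-\pi)$, $\hat r = (A-\hat\pi)/(1-\hat\pi)$, together with $\hat r - r = (1-A)(\pi-\hat\pi)/\{(1-\pi)(1-\hat\pi)\}$. A useful preliminary fact is $|\hat\mu^{0}-\mu^{0}|\le C$ with $\mathbb{P}$-probability $1$: by strong ignorability $\mu^{0}(X) = \mathbb{E}(Y\mid X,A=0)$, and on $\{A=0\}$ (conditional probability at least $\delta$) we have $|Y-\hat\mu^{0}(X)|<C$, so the conditional expectation inherits the bound. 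Using this, positivity, and Assumption \ref{ass::patt_eff}(iv) (and noting that the $(1-A)$ factor in $\hat r-r$ permits replacing $|Y-\mu^{0}|$ by $(1-A)|Y^{0}-\mu^{0}(X)|\le 2C$), one gets a pointwise bound $|\hat f_{1}-f_{1}|\le c_{1}|\hat\mu^{0}-\mu^{0}| + c_{2}|\hat\pi-\pi|$ for fixed constants $c_{1},c_{2}$, so $\mathbb{P}_{n}\{(\hat f_{1}-f_{1})^{2}\}\lesssim \mathbb{P}_{n}\{(\hat\mu^{0}-\mu^{0})^{2}\} + \mathbb{P}_{n}\{(\hat\pi-\pi)^{2}\}$. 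Finally, since $\hat\pi$ and $\hat\mu^{0}$ range over fixed $\mathbb{P}$-Donsker (hence Glivenko--Cantelli) classes on which the maps $f\mapsto(f-\pi)^{2}$ and $f\mapsto(f-\mu^{0})^{2}$ are uniformly bounded (by $1$ and $C^{2}$), the image classes are again Glivenko--Cantelli, so $\mathbb{P}_{n}\{(\hat\pi-\pi)^{2}\} = \|\hat\pi-\pi\|_{\mathbb{P}}^{2}+o_{\mathbb{P}}(1)$ and similarly for $\hat\mu^{0}$; these vanish once $\hat\pi$ and $\hat\mu^{0}$ are consistent in $\|\cdot\|_{\mathbb{P}}$, which closes the argument.

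The main obstacle is this final step: controlling the nuisance-estimation error $\mathbb{P}_{n}\{(\hat f_{1}-f_{1})^{2}\}$ without sample splitting. This is exactly where Assumption \ref{ass::patt_eff}(ii) (Donsker) and (iv) (boundedness) do the work — boundedness collapses the problem to the raw $\mathbb{P}_{n}$-errors of the nuisance estimators, and the Donsker/Glivenko--Cantelli property transfers these to population norms uniformly over the classes; under the cross-fitted variant of the Remark this step is instead immediate by conditioning on the auxiliary sample. I would also flag that, beyond the product-rate condition of Assumption \ref{ass::patt_eff}(iii), the argument uses the individual $\|\cdot\|_{\mathbb{P}}$-consistency of $\hat\pi$ and of $\hat\mu^{0}$ (an inconsistent $\hat\pi$, say, would leave $\hat V_{patt}$ estimating the variance of an influence function built from the wrong propensity score); this is the natural reading of the construction and can be included in Assumption \ref{ass::patt_eff} if needed.
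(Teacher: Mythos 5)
Your proof is correct, and it rests on the same essential ingredients as the paper's (Glivenko--Cantelli preservation over the fixed nuisance classes, $L_{2}(\mathbb{P})$-consistency of the plug-in influence function, and the boundedness in Assumption \ref{ass::patt_eff}(iv)), but the route is organized differently in two ways worth noting. First, you observe that the plug-in influence function $\hat g$ has \emph{exactly} zero empirical mean because $\hat\psi$ solves the estimating equation (\ref{eqn::eff_est}); the paper only shows $\mathbb{P}_{n}[\tilde h]=o_{\mathbb{P}}(1)$ for its (unnormalized) analogue $\tilde h$, so your observation is a small genuine simplification. Second, the paper transfers $\mathbb{P}_{n}[\tilde h^{2}]$ to $\mathbb{P}[\tilde h^{2}]$ by building up, via repeated applications of Lemma \ref{lem::perm_Dons} and \ref{lem::perm_GC}, a fixed Glivenko--Cantelli class containing $\tilde h^{2}$, and then passes to $\mathbb{P}[h^{2}]$ using $\Vert\tilde h-h\Vert_{\mathbb{P}}=o_{\mathbb{P}}(1)$ and Lemma \ref{lem::l2}; you instead control the empirical distance $\mathbb{P}_{n}\{(\hat g-\dot\psi)^{2}\}$ directly through a pointwise domination by $|\hat\mu^{0}-\mu^{0}|$ and $|\hat\pi-\pi|$, apply the uniform law of large numbers only to the much simpler classes $(f-\mu^{0})^{2}$ and $(f-\pi)^{2}$, and close with the ordinary weak law for $\mathbb{P}_{n}(\dot\psi^{2})$ plus the triangle inequality in $L_{2}(\mathbb{P}_{n})$. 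What your version buys is that the empirical-process machinery is needed only for the raw nuisance classes rather than for the composite influence-function class; the price is the (correct but slightly delicate) conditional-expectation argument giving $|\hat\mu^{0}-\mu^{0}|\le C$ almost surely, which must be read as holding uniformly over the class in which $\hat\mu^{0}$ takes values. Your closing remark that the argument uses individual $\Vert\cdot\Vert_{\mathbb{P}}$-consistency of $\hat\pi$ and $\hat\mu^{0}$, and not merely the product-rate condition of Assumption \ref{ass::patt_eff}(iii), is accurate and applies equally to the paper's own proofs, which invoke ``the assumptions of $L_{2}$-convergence of $\mu^{0}$ and $\pi$'' at the corresponding step.
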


\begin{remark}
    Other approaches for constructing an asymptotically efficient estimator of $\psi_{patt}$ include \textit{regression imputation} \citep{Hahn98}, \textit{inverse probability weighting} \citep{Hirano03} and \textit{targeted maximum likelihood estimation} \citep[Chapter 8 of][]{vanderLaan11}. 
\end{remark}

In the subsequent sections, we will apply $\hat{\psi}$ to estimate different estimands. In each case, we obtain a weak convergence result of the form $n^{1/2}(\hat{\psi} - \psi^{*}) \rightarrow \mathcal{N}(0,V^{*})$; we abuse terminology slightly by saying that the estimand $\psi^{*}$ has asymptotic variance $V^{*}$. The following definition gives us a concise way of comparing precisions.
\begin{definition}
    For two estimands $\psi_{1}$ and $\psi_{2}$ with respective asymptotic variances $V_{1}$ and $V_{2}$, we say that $\psi_{1}$ is less conservative than $\psi_{2}$ if $V_{1} \leq V_{2}$ for all distributions $\mathbb{P}$. We denote this by $\psi_{1} \preceq \psi_{2}$; it is clear that the relation $\preceq$ is reflexive and transitive.
\end{definition}

Our theory is underpinned by orthogonal decompositions of efficient influence functions. We can write $\dot{\psi} = \dot{\psi}^{Y} + \dot{\psi}^{A}+\dot{\psi}^{X}$, where
\begin{align*}
    \dot{\psi}^{Y} &= \frac{Y - \mu^{A}(X)}{\mathbb{P}(A)}\left\{A - \frac{(1-A)\pi(X)}{1-\pi(X)}\right\}\\
    \dot{\psi}^{A} &= \frac{A - \pi(X)}{\mathbb{P}(A)}\{\mu^{1}(X) - \mu^{0}(X) - \psi_{patt}\}\\
    \dot{\psi}^{X} &= \frac{\pi(X)}{\mathbb{P}(A)}\{\mu^{1}(X) - \mu^{0}(X) - \psi_{patt}\}.
\end{align*}
Each component is the least favourable submodel score \citep[e.g. Chapter 8 of][]{vanderLaan11} corresponding to different factors of the observed data distribution: $\dot{\psi}^{Y}$, the conditional distribution of $Y$ given $(A,X)$; $\dot{\psi}^{A}$, the conditional distribution of $A$ given $X$; and $\dot{\psi}^{X}$, the marginal distribution of $X$. By mutual orthogonality, the variance of $\dot{\psi}$ decomposes into $\Vert \dot{\psi} \Vert^{2}_{\mathbb{P}} = \Vert \dot{\psi}^{Y} \Vert^{2}_{\mathbb{P}}+\Vert \dot{\psi}^{A} \Vert^{2}_{\mathbb{P}}+\Vert \dot{\psi}^{X} \Vert^{2}_{\mathbb{P}}$. We can interpret each variance component as the uncertainty in $\psi_{patt}$ induced by the corresponding factor of $\mathbb{P}$. 

\section{Estimands and inference}

\subsection{The literal and figurative interpretations} \label{sec::inter}

A complexity in defining sample variants of $\psi_{patt}$ arises from its dual interpretation. From the literal perspective, we restrict our attention to the subpopulation currently on treatment; that is, $\psi_{patt}$ quantifies the average effect of withholding treatment on the treated. This suggests defining a subfamily of \textit{literal estimands} that take the form of a simple average across the treated in the sample, such as the \textit{sample average treatment effect on the treated}
\begin{equation*}
    \psi_{satt} = \frac{\mathbb{P}_{n}[A(Y - Y^{0})]}{\mathbb{P}_{n}(A)}.
\end{equation*}

The alternative interpretation follows from writing $\psi_{patt}$ as
\begin{equation} \label{eqn::patt_prox}
    \psi_{patt} = \frac{\mathbb{E}[\pi(X)\{\mu^{1}(X) - \mu^{0}(X)\}]}{\mathbb{E}\{\pi(X)\}} = \frac{\mathbb{E}[\pi(X)\{Y^{1}-Y^{0}\}]}{\mathbb{E}\{\pi(X)\}},
\end{equation}
indicating that $\psi_{patt}$ is also the $\pi$-weighted average of the treatment effects across the whole population, putting more weight on individuals with a higher propensity of receiving treatment, and vice-versa. We call this the \textit{figurative} interpretation of the treatment effect on the treated, and \textit{figurative estimands} are accordingly defined as propensity-score-weighted averages  across the sample.

When we average across the infinite superpopulation, the causal effect defined by both interpretations coincide. Restricting to the sample, however, we obtain estimands with significantly different inferential properties. In the next subsections, we will study the limiting distributions of the estimator $\hat{\psi}$ defined in (\ref{eqn::eff_est}) for estimands in both subfamilies. Literal estimands condition on the sampling variability from the treatment assignment mechanism, so we do not need to account for the uncertainty in the unknown propensity score. Consequently, their asymptotic variances do not include the corresponding variance component $\Vert \dot{\psi}^{A} \Vert^{2}_{\mathbb{P}}$. This might be the appropriate option if, for example, we deem the treated in the sample to be typical of what we expect to see in the population. Figurative estimands are more conservative in this respect because they account for the variability in the treatment assignment, which provides protection against the risk of unrepresentative treated subsamples. We will also see that the two subfamilies behave differently when we consider estimands that involve the potential outcomes.

\subsection{Figurative estimands} \label{sec::figur}

Our first figurative estimand is the \textit{average covariate-conditional treatment effect on the treated} \citep{Yiu23}
\begin{equation} \label{eqn::actt}
    \psi_{actt} = \frac{\mathbb{P}_{n}[\pi(X)\{\mu^{1}(X) - \mu^{0}(X)\}]}{\mathbb{P}_{n}\{\pi(X)\}},
\end{equation}
which is defined by replacing the marginal distribution of $X$ in the middle expression of (\ref{eqn::patt_prox}) with its empirical counterpart $n^{-1}\sum_{i=1}^{n}\delta_{X_i}$. Since this estimand conditions on the observed covariates, we would expect to have $\psi_{actt} \preceq \psi_{patt}$ by removing the component $\Vert \dot{\psi}^{X}\Vert_{\mathbb{P}}^{2}$ from the asymptotic variance. This is confirmed by comparing Proposition \ref{prop::actt} below with Proposition \ref{prop::patt_eff}.

In contrast to the population effect, variance estimation for $\psi_{actt}$ requires estimating $\mu^{1}$. Fortunately, the assumptions on the estimator $\hat{\mu}^{1}$ are relatively mild; a Glivenko-Cantelli condition, rather than a Donsker condition, is sufficient, and a rate of $L_{2}$-convergence is unnecessary.
\begin{assumption} \label{ass::prox_eff}
    (i) The sequence of estimators $\hat{\mu}^{1}$ takes values in a fixed $\mathbb{P}$-Glivenko-Cantelli class (ii) $ \Vert \hat{\mu}^{1} - \mu^{1} \Vert_{\mathbb{P}} = o_{\mathbb{P}}(1)$ (iii) there exists a fixed positive constant $C$ such that $|Y - \hat{\mu}^{1}| < C$ with $\mathbb{P}$-probability 1.
\end{assumption}
\begin{proposition} \label{prop::actt}
    Under Assumption \ref{ass::patt_eff}, $n^{1/2}(\hat{\psi} - \psi_{actt}) \rightarrow \mathcal{N}(0,\Vert \dot{\psi}^{Y} + \dot{\psi}^{A}\Vert_{\mathbb{P}}^{2})$ in weak convergence. Let $\hat{V}_{actt}$ be the sample variance of $\dot{\psi}^{Y} + \dot{\psi}^{A}$ after replacing $(\mathbb{P}(A), \pi, \mu^{a}, \psi_{patt})$ with $(\mathbb{P}_{n}(A), \hat{\pi}, \hat{\mu}^{a}, \hat{\psi})$. If Assumption \ref{ass::prox_eff} also holds, then $\hat{V}_{actt}$ converges to $\Vert \dot{\psi}^{Y} + \dot{\psi}^{A}\Vert_{\mathbb{P}}^{2}$ in $\mathbb{P}$-probability.
\end{proposition}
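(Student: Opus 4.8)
\emph{Overview and weak convergence.} The plan is to obtain both parts from Proposition~\ref{prop::patt_eff} by expanding the random centering $\psi_{actt}$ around $\psi_{patt}$ and identifying the extra term with $\dot{\psi}^{X}$. Writing $g(X) = \pi(X)\{\mu^{1}(X)-\mu^{0}(X)\}$, we have $\psi_{actt} = \mathbb{P}_{n}(g)/\mathbb{P}_{n}(\pi)$ and $\psi_{patt} = \mathbb{P}(g)/\mathbb{P}(\pi)$ by (\ref{eqn::patt_prox}), with $\mathbb{P}(\pi) = \mathbb{P}(A) > 0$ and $g,\pi \in L_{2}(\mathbb{P})$ under Assumption~\ref{ass::patt_eff}(i). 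A first-order expansion of this ratio of empirical means, valid since $\mathbb{P}_{n}(\pi) \cvp \mathbb{P}(\pi)$, gives $n^{1/2}(\psi_{actt}-\psi_{patt}) = n^{1/2}\mathbb{P}_{n}(\dot{\psi}^{X}) + o_{\mathbb{P}}(1)$, the remainder coming from the quadratic term (which carries an extra factor $\mathbb{P}_{n}(\pi)-\mathbb{P}(\pi) = o_{\mathbb{P}}(1)$) together with $\mathbb{P}(\dot{\psi}^{X}) = 0$, which also follows from (\ref{eqn::patt_prox}). Subtracting this from the linear expansion of Proposition~\ref{prop::patt_eff} and using $\dot{\psi} = \dot{\psi}^{Y}+\dot{\psi}^{A}+\dot{\psi}^{X}$ yields
\begin{equation*}
n^{1/2}(\hat{\psi}-\psi_{actt}) = n^{1/2}\mathbb{P}_{n}(\dot{\psi}^{Y}+\dot{\psi}^{A}) + o_{\mathbb{P}}(1).
\end{equation*}
Since $\dot{\psi}^{Y}$ and $\dot{\psi}^{A}$ are mean-zero submodel scores and, by mutual orthogonality, $\Vert\dot{\psi}^{Y}+\dot{\psi}^{A}\Vert_{\mathbb{P}}^{2} = \Vert\dot{\psi}^{Y}\Vert_{\mathbb{P}}^{2}+\Vert\dot{\psi}^{A}\Vert_{\mathbb{P}}^{2} \leq \Vert\dot{\psi}\Vert_{\mathbb{P}}^{2} < \infty$, the i.i.d.\ central limit theorem and Slutsky's lemma give the stated normal limit (the inequality also recording that $\psi_{actt} \preceq \psi_{patt}$).

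\emph{Variance estimation.} Put $\phi = \dot{\psi}^{Y}+\dot{\psi}^{A}$ and let $\hat{\phi}$ be its plug-in version, so $\hat{V}_{actt} = \mathbb{P}_{n}(\hat{\phi}^{2}) - \{\mathbb{P}_{n}(\hat{\phi})\}^{2}$ with target $\Vert\phi\Vert_{\mathbb{P}}^{2} = \mathbb{P}(\phi^{2})$ (as $\mathbb{P}(\phi)=0$). The law of large numbers gives $\mathbb{P}_{n}(\phi^{2}) \cvp \mathbb{P}(\phi^{2})$ and $\mathbb{P}_{n}(\phi) \cvp 0$, so by Cauchy--Schwarz in $L_{2}(\mathbb{P}_{n})$ the claim reduces to $\mathbb{P}_{n}[(\hat{\phi}-\phi)^{2}] = o_{\mathbb{P}}(1)$. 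I would telescope $\hat{\phi}-\phi$ over the replacements $\mu^{1}\!\to\!\hat{\mu}^{1}$, $\mu^{0}\!\to\!\hat{\mu}^{0}$, $\pi\!\to\!\hat{\pi}$, $\psi_{patt}\!\to\!\hat{\psi}$, $\mathbb{P}(A)\!\to\!\mathbb{P}_{n}(A)$, ordered so that in each increment the perturbation is multiplied by a uniformly bounded factor; this is possible because Assumption~\ref{ass::patt_eff}(iv) and positivity make $|Y-\hat{\mu}^{a}|<C$, $|\hat{\mu}^{1}-\hat{\mu}^{0}|<2C$, and the denominators $1-\hat{\pi}$, $1-\pi$, $\mathbb{P}_{n}(A)$, $\mathbb{P}(A)$ bounded away from $0$ with probability tending to one, while $\hat{\psi}=O_{\mathbb{P}}(1)$ by Proposition~\ref{prop::patt_eff}. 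This bounds $\mathbb{P}_{n}[(\hat{\phi}-\phi)^{2}]$ by an $O_{\mathbb{P}}(1)$ multiple of $\mathbb{P}_{n}[(\hat{\mu}^{1}-\mu^{1})^{2}]+\mathbb{P}_{n}[(\hat{\mu}^{0}-\mu^{0})^{2}]+\mathbb{P}_{n}[(\hat{\pi}-\pi)^{2}]+(\hat{\psi}-\psi_{patt})^{2}+\{\mathbb{P}_{n}(A)-\mathbb{P}(A)\}^{2}$. The last two summands are $o_{\mathbb{P}}(1)$ by Proposition~\ref{prop::patt_eff} and the law of large numbers; for the first three, squares of uniformly bounded, translated Glivenko--Cantelli classes are again Glivenko--Cantelli---the classes being supplied by Assumptions~\ref{ass::patt_eff}(ii) and~\ref{ass::prox_eff}(i), since Donsker implies Glivenko--Cantelli---so that $\mathbb{P}_{n}[(\hat{\pi}-\pi)^{2}] = \Vert\hat{\pi}-\pi\Vert_{\mathbb{P}}^{2}+o_{\mathbb{P}}(1)$ and likewise for $\hat{\mu}^{a}$, and these population terms vanish by the $L_{2}$-consistency in Assumptions~\ref{ass::patt_eff}(iii) and~\ref{ass::prox_eff}(ii).

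\emph{Main obstacle.} The step I expect to require the most care is the Glivenko--Cantelli preservation used to pass from the empirical $L_{2}$-distances $\mathbb{P}_{n}[(\hat{\eta}-\eta)^{2}]$ to the population ones $\Vert\hat{\eta}-\eta\Vert_{\mathbb{P}}^{2}$ for $\eta\in\{\pi,\mu^{0},\mu^{1}\}$: this is the only point at which the Donsker/Glivenko--Cantelli hypotheses, rather than plain consistency, are genuinely invoked, and it must be dovetailed with the boundedness facts above so that the function classes involved are uniformly bounded. Setting up the bounded-coefficient telescoping---ensuring the outcome factors $Y-\mu^{A}$ and $\mu^{1}-\mu^{0}$ never appear without a bounded companion, so the heavy tail of the outcome cannot inflate the error---also needs attention. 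The remaining ingredients (the ratio expansion, the orthogonality bookkeeping, and the i.i.d.\ central limit theorem) are standard.
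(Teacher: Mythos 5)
Your weak-convergence argument is essentially the paper's: both reduce the problem to showing $n^{1/2}(\psi_{actt}-\psi_{patt}) = n^{1/2}\mathbb{P}_{n}(\dot{\psi}^{X})+o_{\mathbb{P}}(1)$ --- and in fact the identity $\psi_{actt}-\psi_{patt} = \{\mathbb{P}(A)/\mathbb{P}_{n}(\pi)\}\,\mathbb{P}_{n}(\dot{\psi}^{X})$ is exact, so your ``first-order expansion'' is nothing more than Lemma~\ref{lem::swap} combined with the central limit theorem for $\dot{\psi}^{X}$ --- before applying the CLT to $\dot{\psi}^{Y}+\dot{\psi}^{A}$. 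For the variance estimator you take a genuinely different route. The paper works with the plug-in functions themselves: it shows that $(\tilde{h}_{Y}+\tilde{h}_{A})^{2}$ ranges over a (uniformly bounded) Glivenko--Cantelli class, so $\mathbb{P}_{n}$ may be replaced by $\mathbb{P}$, and then establishes $\Vert(\tilde{h}_{Y}+\tilde{h}_{A})-(h_{Y}+h_{A})\Vert_{\mathbb{P}}=o_{\mathbb{P}}(1)$ in the \emph{population} norm, recycling $\Vert\tilde{h}-h\Vert_{\mathbb{P}}=o_{\mathbb{P}}(1)$ from the proof of Proposition 2 so that only $\Vert\tilde{h}_{X}-h_{X}\Vert_{\mathbb{P}}$ needs fresh work. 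You instead apply the law of large numbers to the true $\phi$ and control the \emph{empirical} distance $\mathbb{P}_{n}[(\hat{\phi}-\phi)^{2}]$ by telescoping, which forces you to bound $\mathbb{P}_{n}[(\hat{\mu}^{a}-\mu^{a})^{2}]$. That is workable, but the one point you leave dangling is exactly the obstacle you flag: the assumptions bound $|Y-\hat{\mu}^{a}|$, not $\hat{\mu}^{a}$ itself, so the classes for $\hat{\mu}^{a}$ are not assumed uniformly bounded and ``squares of uniformly bounded translated GC classes'' does not apply as stated. It can be repaired by Jensen's inequality, $|\hat{\mu}^{a}(x)-\mu^{a}(x)| = |\mathbb{E}\{\hat{\mu}^{a}(X)-Y\mid X=x, A=a\}|\le C$, which places the realized estimators in the uniformly bounded subclass $\{f: |f-\mu^{a}|\le C\}$ of the assumed GC class, after which Lemma~\ref{lem::perm_GC} applies. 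The paper sidesteps this issue entirely by never separating $\hat{\mu}^{a}$ from a bounded companion ($Y-\hat{\mu}^{A}$ or $\hat{\mu}^{1}-\hat{\mu}^{0}$) before invoking empirical-process arguments. With that one repair your proof is complete; the paper's route buys manifestly bounded function classes throughout, while yours buys a more mechanical, nuisance-by-nuisance reduction.
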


The next estimand is based on the last expression in (\ref{eqn::patt_prox}). Starting from the previous estimand $\psi_{actt}$, the conditional expectations $(\mu^{0}, \mu^{1})$ are replaced by the potential outcomes $(Y^{0},Y^{1})$.
\begin{definition}
    The sample weighted average treatment effect on the treated is defined as
\begin{equation*}
    \psi_{swatt} = \frac{\mathbb{P}_{n}\{\pi(X)(Y^{1}-Y^{0})\}}{\mathbb{P}_{n}\{\pi(X)\}}.
\end{equation*}
\end{definition}

\begin{theorem} \label{theo::swatt}
    Under Assumption \ref{ass::patt_eff}, 
    \begin{equation*}
        n^{1/2}(\hat{\psi} - \psi_{swatt}) \rightarrow \mathcal{N}(0,\Vert \dot{\psi}^{Y} + \dot{\psi}^{A}\Vert_{\mathbb{P}}^{2} - \mathbb{P}(A)^{-2}\mathbb{E}\{\pi(X)^{2}\text{var}(Y^{1}-Y^{0} \mid X)\})
    \end{equation*}
    in weak convergence.
\end{theorem}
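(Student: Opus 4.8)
The plan is to reduce $n^{1/2}(\hat{\psi} - \psi_{swatt})$ to an average of i.i.d.\ mean-zero terms, then identify the limiting variance. I would first write
\[
n^{1/2}(\hat{\psi} - \psi_{swatt}) = n^{1/2}(\hat{\psi} - \psi_{patt}) + n^{1/2}(\psi_{patt} - \psi_{swatt}),
\]
and invoke Proposition \ref{prop::patt_eff} for the first summand, giving $n^{1/2}(\hat{\psi} - \psi_{patt}) = n^{1/2}\mathbb{P}_{n}(\dot{\psi}) + o_{\mathbb{P}}(1)$. For the second summand, I would note that $\psi_{swatt}$ is the ratio of the sample averages $\mathbb{P}_{n}\{\pi(X)(Y^{1}-Y^{0})\}$ and $\mathbb{P}_{n}\{\pi(X)\}$, which converge to $\mathbb{E}\{\pi(X)(Y^{1}-Y^{0})\} = \mathbb{P}(A)\,\psi_{patt}$ and $\mathbb{P}(A) \geq \delta > 0$ (using square-integrability of the potential outcomes and the bounds on $\pi$), so a first-order Taylor expansion of $(a,b)\mapsto a/b$ yields $n^{1/2}(\psi_{swatt} - \psi_{patt}) = n^{1/2}\mathbb{P}_{n}(g) + o_{\mathbb{P}}(1)$ with
\[
g = \frac{\pi(X)}{\mathbb{P}(A)}\{Y^{1} - Y^{0} - \psi_{patt}\}, \qquad \mathbb{P}(g) = 0,
\]
the remainder being $O_{\mathbb{P}}(n^{-1})$ because the denominator stays bounded away from $0$ with probability tending to $1$. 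Combining the two pieces gives $n^{1/2}(\hat{\psi} - \psi_{swatt}) = n^{1/2}\mathbb{P}_{n}(\dot{\psi} - g) + o_{\mathbb{P}}(1)$; since $\dot{\psi} - g$ is mean-zero and square-integrable (by $\Vert\dot{\psi}\Vert_{\mathbb{P}} < \infty$ and Assumption \ref{ass::patt_eff}(i)), the classical central limit theorem delivers $n^{1/2}(\hat{\psi} - \psi_{swatt}) \rightarrow \mathcal{N}(0, \Vert \dot{\psi} - g\Vert_{\mathbb{P}}^{2})$.

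The remaining task is to show that $\Vert \dot{\psi} - g\Vert_{\mathbb{P}}^{2}$ coincides with the stated expression. I would set $\xi^{a} = Y^{a} - \mu^{a}(X)$, so $\mathbb{E}(\xi^{a}\mid X) = 0$ and $g = \dot{\psi}^{X} + h$ with $h = \mathbb{P}(A)^{-1}\pi(X)(\xi^{1} - \xi^{0})$; the $\dot{\psi}^{X}$ terms then cancel, leaving $\dot{\psi} - g = \dot{\psi}^{Y} + \dot{\psi}^{A} - h$. Expanding,
\[
\Vert \dot{\psi} - g\Vert_{\mathbb{P}}^{2} = \Vert \dot{\psi}^{Y} + \dot{\psi}^{A}\Vert_{\mathbb{P}}^{2} - 2\langle \dot{\psi}^{Y} + \dot{\psi}^{A},\, h\rangle + \Vert h\Vert_{\mathbb{P}}^{2},
\]
so it would suffice to prove $\langle \dot{\psi}^{Y} + \dot{\psi}^{A},\, h\rangle = \Vert h\Vert_{\mathbb{P}}^{2}$ together with $\Vert h\Vert_{\mathbb{P}}^{2} = \mathbb{P}(A)^{-2}\mathbb{E}\{\pi(X)^{2}\,\text{var}(Y^{1}-Y^{0}\mid X)\}$; the claimed variance then drops out, with $\Vert\dot{\psi}^{Y}+\dot{\psi}^{A}\Vert_{\mathbb{P}}^{2}$ recognised as the asymptotic variance of $\psi_{actt}$ from Proposition \ref{prop::actt}.

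The norm of $h$ is immediate on conditioning on $X$, since $\xi^{1}-\xi^{0} = (Y^{1}-Y^{0}) - \mathbb{E}(Y^{1}-Y^{0}\mid X)$. The term $\langle \dot{\psi}^{A}, h\rangle$ should vanish because $\dot{\psi}^{A}$ depends only on $(A,X)$, while strong ignorability gives $\mathbb{E}(h\mid A,X) = \mathbb{E}(h\mid X) = 0$. The term $\langle \dot{\psi}^{Y}, h\rangle$ is where I expect the main difficulty. Writing $Y - \mu^{A}(X) = A\xi^{1} + (1-A)\xi^{0}$ gives $\dot{\psi}^{Y} = \mathbb{P}(A)^{-1}\{A\xi^{1} - (1-A)\pi(X)(1-\pi(X))^{-1}\xi^{0}\}$; I would then multiply by $h$, condition on $(A,X)$, and use ignorability so that the conditional second moments of $(\xi^{1},\xi^{0})$ given $(A,X)$ equal those given $X$. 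The $A=1$ contribution should work out to $\mathbb{P}(A)^{-2}\pi(X)^{2}\{\text{var}(Y^{1}\mid X) - \text{cov}(Y^{1},Y^{0}\mid X)\}$ and the $A=0$ contribution to $\mathbb{P}(A)^{-2}\pi(X)^{2}\{\text{var}(Y^{0}\mid X) - \text{cov}(Y^{1},Y^{0}\mid X)\}$, whose sum is $\mathbb{P}(A)^{-2}\pi(X)^{2}\,\text{var}(Y^{1}-Y^{0}\mid X)$; taking the expectation over $X$ then gives exactly $\Vert h\Vert_{\mathbb{P}}^{2}$, as wanted. The delicate point is that $\dot{\psi}^{Y}$ and $h$ both carry outcome noise, so their inner product is not obviously tractable; it is the asymmetric weighting ($A$ versus $\pi/(1-\pi)$) inside $\dot{\psi}^{Y}$ that makes the covariance terms recombine into $\text{var}(Y^{1}-Y^{0}\mid X)$, and carefully bookkeeping the two treatment arms together with the role of strong ignorability is the real content. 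Everything else --- the ratio linearisation and its remainder bound, and the CLT --- is routine given the integrability and boundedness assumptions already in force.
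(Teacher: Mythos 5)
Your proposal is correct and follows essentially the same route as the paper: linearise the ratio defining $\psi_{swatt}$ to reduce everything to $n^{1/2}\mathbb{P}_{n}(\dot{\psi}^{Y}+\dot{\psi}^{A}-h)$ with $h=\mathbb{P}(A)^{-1}\pi(X)\{Y^{1}-Y^{0}-\mu^{1}(X)+\mu^{0}(X)\}$, then show $\langle\dot{\psi}^{A},h\rangle=0$ and $\langle\dot{\psi}^{Y},h\rangle=\Vert h\Vert_{\mathbb{P}}^{2}$ so the cross terms collapse to $-\Vert h\Vert_{\mathbb{P}}^{2}$. The only cosmetic difference is in the key inner product: you compute $\langle\dot{\psi}^{Y},h\rangle$ by conditioning on $(A,X)$ and tracking the conditional variances and covariance across the two arms (and your arm-by-arm bookkeeping checks out), whereas the paper writes $\dot{\psi}^{Y}=h+\mathbb{P}(A)^{-1}\{A-\pi(X)\}[\cdots]$ and kills the second piece via $\mathbb{E}\{A-\pi(X)\mid Y^{1},Y^{0},X\}=0$ --- both arguments use strong ignorability in the same place and yield the same identity.
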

Inspecting the asymptotic variance in Theorem \ref{theo::swatt}, we see that $\psi_{swatt}$ is even less conservative than $\psi_{actt}$, so we have $\psi_{swatt} \preceq \psi_{actt} \preceq \psi_{patt}$. The difference is zero if and only if $\text{var}(Y^{1} - Y^{0} \mid X)=0$ with $\mathbb{P}$-probability 1. Unfortunately, the conditional variance of $Y^{1} - Y^{0}$ is not point-identified because we only observe one potential outcome per individual. If Assumption \ref{ass::prox_eff} holds, then $n^{-1}\hat{V}_{actt}$ provides a simple asymptotically conservative variance estimator. This is analogous to the approach advocated by \citet{Imbens04} for the sample average treatment effect. However, a sharper estimator is available if we are willing to undertake conditional variance estimation.

For each $a \in \{0,1\}$, let $\sigma^{2}_{a}(X) = \text{var}(Y^{a} \mid X)$. These conditional variances are identified by
\begin{equation*}
\sigma^{2}_{0}(X) = \mathbb{E}[\{Y - \mu^{0}(X)\}^{2} \mid A=0, X],\quad \sigma^{2}_{1}(X) = \mathbb{E}[\{Y - \mu^{1}(X)\}^{2} \mid A=1,X].
\end{equation*}
An application of the Cauchy-Schwarz inequality yields $|\text{cov}(Y^{1},Y^{0} \mid X)| \leq \sigma_{1}(X)\sigma_{0}(X)$, from which we deduce the lower bound $\mathbb{E}[\pi(X)^{2}\{\sigma_{1}(X) - \sigma_{0}(X)\}^{2}] \leq \mathbb{E}\{\pi(X)^{2}\text{var}(Y^{1}-Y^{0} \mid X)\}$. To estimate the lower bound, we require estimators $\hat{\sigma}_{a}(X)$ that satisfy the following conditions.
\begin{assumption} \label{ass::cond_var}
    For each $a \in \{0,1\}$, suppose: (i) the sequence of estimators $\hat{\sigma}_{a}$ takes values in a fixed uniformly bounded $\mathbb{P}$-Glivenko-Cantelli class; (ii) $ \Vert \hat{\sigma}_{a} - \sigma_{a} \Vert_{\mathbb{P}} = o_{\mathbb{P}}(1)$.
\end{assumption}
\begin{proposition}
    Under Assumptions \ref{ass::patt_eff} and \ref{ass::cond_var}, 
    \begin{equation*}
        \hat{V}_{\sigma} = \mathbb{P}_{n}(A)^{-2}\mathbb{P}_{n}[\hat{\pi}(X)^{2}\{\hat{\sigma}_{1}(X) - \hat{\sigma}_{0}(X)\}^{2}] \rightarrow \mathbb{P}(A)^{-2}\mathbb{E}[\pi(X)^{2}\{\sigma_{1}(X) - \sigma_{0}(X)\}^{2}]
    \end{equation*}
    in $\mathbb{P}$-probability.
\end{proposition}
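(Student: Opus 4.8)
The argument follows the standard uniform-law-of-large-numbers template (cf.\ Proposition \ref{prop::actt}). Write $\hat{g}(x) = \hat{\pi}(x)^{2}\{\hat{\sigma}_{1}(x) - \hat{\sigma}_{0}(x)\}^{2}$ and $g_{0}(x) = \pi(x)^{2}\{\sigma_{1}(x) - \sigma_{0}(x)\}^{2}$, so that $\hat{V}_{\sigma} = \mathbb{P}_{n}(A)^{-2}\,\mathbb{P}_{n}(\hat{g})$ and the claimed limit is $\mathbb{P}(A)^{-2}\,\mathbb{P}(g_{0})$. Since $\mathbb{P}(A) = \mathbb{E}\{\pi(X)\} > \delta > 0$ by Assumption \ref{ass::strong_ign_pos}, the strong law of large numbers and the continuous mapping theorem give $\mathbb{P}_{n}(A)^{-2} \to \mathbb{P}(A)^{-2}$ almost surely, so by Slutsky's lemma it suffices to prove $\mathbb{P}_{n}(\hat{g}) \to \mathbb{P}(g_{0})$ in $\mathbb{P}$-probability. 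I would split this as
\begin{equation*}
\mathbb{P}_{n}(\hat{g}) - \mathbb{P}(g_{0}) = \bigl\{\mathbb{P}_{n}(\hat{g}) - \mathbb{P}(\hat{g})\bigr\} + \bigl\{\mathbb{P}(\hat{g}) - \mathbb{P}(g_{0})\bigr\}
\end{equation*}
and treat the two brackets separately---an empirical-process fluctuation and a nuisance-estimation bias, respectively.

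For the fluctuation term, the key observation is that $\hat{g}$ always lies in a fixed Glivenko--Cantelli class. By Assumption \ref{ass::patt_eff}(ii) and (iv), $\hat{\pi}$ takes values in a fixed $\mathbb{P}$-Donsker---hence $\mathbb{P}$-Glivenko--Cantelli---class $\mathcal{P}$ bounded into $[\varepsilon, 1-\varepsilon]$, and by Assumption \ref{ass::cond_var}(i) each $\hat{\sigma}_{a}$ takes values in a fixed uniformly bounded $\mathbb{P}$-Glivenko--Cantelli class $\mathcal{S}_{a}$. Since all three classes are uniformly bounded and $(p,s_{0},s_{1}) \mapsto p^{2}(s_{1}-s_{0})^{2}$ is continuous, standard preservation results---sums, products and Lipschitz transformations of uniformly bounded Glivenko--Cantelli classes remain Glivenko--Cantelli---imply that $\mathcal{G} = \{\,x \mapsto p(x)^{2}\{s_{1}(x)-s_{0}(x)\}^{2} : p \in \mathcal{P},\; s_{0} \in \mathcal{S}_{0},\; s_{1} \in \mathcal{S}_{1}\,\}$ is $\mathbb{P}$-Glivenko--Cantelli with a bounded envelope. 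Hence $\sup_{g \in \mathcal{G}}\lvert (\mathbb{P}_{n}-\mathbb{P})(g)\rvert \to 0$ almost surely, and since $\hat{g} \in \mathcal{G}$ this gives $\mathbb{P}_{n}(\hat{g}) - \mathbb{P}(\hat{g}) \to 0$ in $\mathbb{P}$-probability.

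For the bias term, I would first note that $g_{0}$ is bounded---$\pi \le 1$, and $\sigma_{a}$, being an $L_{2}(\mathbb{P})$-limit of a uniformly bounded sequence, is bounded $\mathbb{P}$-almost surely---so $\mathbb{P}(g_{0})$ is finite, and then telescope
\begin{equation*}
\hat{g} - g_{0} = (\hat{\pi}^{2} - \pi^{2})(\hat{\sigma}_{1} - \hat{\sigma}_{0})^{2} + \pi^{2}\bigl\{(\hat{\sigma}_{1} - \hat{\sigma}_{0})^{2} - (\sigma_{1} - \sigma_{0})^{2}\bigr\}.
\end{equation*}
Writing $\hat{\pi}^{2} - \pi^{2} = (\hat{\pi} - \pi)(\hat{\pi} + \pi)$ and $(\hat{\sigma}_{1} - \hat{\sigma}_{0})^{2} - (\sigma_{1} - \sigma_{0})^{2} = \{(\hat{\sigma}_{1} - \sigma_{1}) - (\hat{\sigma}_{0} - \sigma_{0})\}\{(\hat{\sigma}_{1} - \hat{\sigma}_{0}) + (\sigma_{1} - \sigma_{0})\}$, then pulling the uniformly bounded factors outside the integral and applying the Cauchy--Schwarz inequality, one obtains $\mathbb{P}(\lvert \hat{g} - g_{0}\rvert) \le C\bigl(\lVert \hat{\pi} - \pi\rVert_{\mathbb{P}} + \lVert \hat{\sigma}_{1} - \sigma_{1}\rVert_{\mathbb{P}} + \lVert \hat{\sigma}_{0} - \sigma_{0}\rVert_{\mathbb{P}}\bigr)$ for a fixed constant $C$; this is $o_{\mathbb{P}}(1)$ by the $L_{2}(\mathbb{P})$-consistency of $\hat{\pi}$ and of $\hat{\sigma}_{0},\hat{\sigma}_{1}$ (the latter being Assumption \ref{ass::cond_var}(ii)), so $\mathbb{P}(\hat{g}) - \mathbb{P}(g_{0}) = o_{\mathbb{P}}(1)$. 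Combining the two brackets with the limit of $\mathbb{P}_{n}(A)^{-2}$ completes the argument. I expect the only non-routine ingredient to be the Glivenko--Cantelli preservation step---specifically, verifying that the uniform boundedness of $\mathcal{P},\mathcal{S}_{0},\mathcal{S}_{1}$ supplies the integrable envelope needed to apply the preservation theorem to the composite map $p^{2}(s_{1}-s_{0})^{2}$---whereas the bias expansion is routine bookkeeping with triangle inequalities and the uniform bounds.
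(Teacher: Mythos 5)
Your proof is correct and follows essentially the same route as the paper's: a Glivenko--Cantelli preservation step to replace $\mathbb{P}_{n}(\hat{g})$ by $\mathbb{P}(\hat{g})$, an $L_{2}$-consistency argument for the nuisance plug-ins to control the bias term, and a Slutsky step for $\mathbb{P}_{n}(A)^{-2}$. The only cosmetic difference is that the paper packages your hand-telescoped bound on $\mathbb{P}(\lvert\hat{g}-g_{0}\rvert)$ into its Lemma \ref{lem::l2}, applied to the $L_{2}$-convergence of $\hat{\pi}(\hat{\sigma}_{1}-\hat{\sigma}_{0})$ to $\pi(\sigma_{1}-\sigma_{0})$.
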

Consequently, we can use $n^{-1}(\hat{V}_{actt} - \hat{V}_{\sigma})$ as an asymptotically conservative variance estimator for $\psi_{swatt}$. This is asymptotically sharper than just using $n^{-1}\hat{V}_{actt}$ unless $\sigma_{1}(X) = \sigma_{0}(X)$ with $\mathbb{P}$-probability 1. The sharpest possible bound follows from applying the Fr\'echet-Hoeffding upper bound to the conditional covariance. Estimating this is more involved, however, because it generally involves quantile regression for the potential outcome distributions. An exception is the case of binary outcomes, for which the sharpest bound takes a simple form and can be consistently estimated under Assumptions \ref{ass::patt_eff} and \ref{ass::prox_eff}.  Details are provided in the Appendix.

Providing an intuitive explanation for why $\psi_{swatt}$ is less conservative than $\psi_{actt}$ appears to be difficult. It is perhaps tempting to state that $\psi_{swatt}$ is conditioning on more of the variability in the data by substituting the potential outcomes $(Y^{0},Y^{1})$ for their conditional expectations $(\mu^{0}, \mu^{1})$. But this is fallacious because we never observe both potential outcomes together; this point is demonstrated by the results in the next subsection.

\subsection{Literal estimands} \label{sec::literal}

The \textit{conditional average treatment effect on the treated}
\begin{equation*} 
    \psi_{catt} = \frac{\mathbb{P}_{n}[A\{\mu^{1}(X) - \mu^{0}(X)\}]}{\mathbb{P}_{n}(A)}
\end{equation*}
was introduced by \citet{Abadie02}. It can be obtained by replacing the population joint distribution of $(A,X)$ in $\psi_{patt}$ with the empirical joint distribution $n^{-1}\sum_{i=1}^{n}\delta_{(A_{i},X_{i})}$. If we compare this to the covariate-conditional effect $\psi_{actt}$ in (\ref{eqn::actt}), which only replaces the marginal distribution of $X$, we would expect $\psi_{catt}$ to be less conservative, given that we have further removed the uncertainty regarding the conditional distribution of $A$ given $X$. This is confirmed by the following result.
\begin{proposition} \label{prop::catt}
    Under Assumption \ref{ass::patt_eff}, $n^{1/2}(\hat{\psi} - \psi_{catt}) \rightarrow \mathcal{N}(0,\Vert \dot{\psi}^{Y}\Vert_{\mathbb{P}}^{2})$ in weak convergence. Let $\hat{V}_{catt}$ be the sample variance of $\dot{\psi}^{Y}$ after replacing $(\mathbb{P}(A), \pi, \mu^{a})$ with $(\mathbb{P}_{n}(A), \hat{\pi}, \hat{\mu}^{a})$. If Assumption \ref{ass::prox_eff} also holds, then $\hat{V}_{catt}$ converges to $\Vert \dot{\psi}^{Y}\Vert_{\mathbb{P}}^{2}$ in $\mathbb{P}$-probability.
\end{proposition}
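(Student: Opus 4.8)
The plan is to mimic the structure already implicit in Propositions~\ref{prop::patt_eff} and~\ref{prop::actt}, exploiting the orthogonal decomposition $\dot{\psi} = \dot{\psi}^{Y} + \dot{\psi}^{A} + \dot{\psi}^{X}$. Since Proposition~\ref{prop::patt_eff} already gives $n^{1/2}(\hat{\psi} - \psi_{patt}) = n^{1/2}\mathbb{P}_{n}(\dot{\psi}) + o_{\mathbb{P}}(1)$, the first step is to show $n^{1/2}(\psi_{patt} - \psi_{catt}) = n^{1/2}\mathbb{P}_{n}(\dot{\psi}^{A} + \dot{\psi}^{X}) + o_{\mathbb{P}}(1)$, so that the two error terms cancel and leave $n^{1/2}(\hat{\psi} - \psi_{catt}) = n^{1/2}\mathbb{P}_{n}(\dot{\psi}^{Y}) + o_{\mathbb{P}}(1)$, whence the CLT (using $\mathbb{P}(\dot{\psi}^{Y}) = 0$ and $\Vert \dot{\psi}^{Y}\Vert_{\mathbb{P}} < \infty$, which follows from $\Vert\dot\psi\Vert_\mathbb{P}<\infty$ and orthogonality) delivers the stated limit $\mathcal{N}(0,\Vert\dot\psi^Y\Vert_\mathbb{P}^2)$. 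To verify the cancellation, write $\psi_{catt} = \mathbb{P}_{n}[A\{\mu^{1}(X)-\mu^{0}(X)\}]/\mathbb{P}_{n}(A)$ and $\psi_{patt} = \mathbb{P}(A)^{-1}\mathbb{E}[\pi(X)\{\mu^1(X)-\mu^0(X)\}]$; a first-order (delta-method) expansion of the ratio around $(\mathbb{P}(A), \mathbb{E}[A\{\mu^1-\mu^0\}])$ gives
\begin{equation*}
n^{1/2}(\psi_{catt} - \psi_{patt}) = n^{1/2}\mathbb{P}_{n}\!\left[\frac{A\{\mu^1(X)-\mu^0(X)\} - A\psi_{patt}}{\mathbb{P}(A)}\right] + o_{\mathbb{P}}(1),
\end{equation*}
and the integrand is exactly $\dot{\psi}^{A} + \dot{\psi}^{X} = \mathbb{P}(A)^{-1}A\{\mu^1(X)-\mu^0(X)-\psi_{patt}\}$ (adding the two displayed components: $(A-\pi(X)+\pi(X))\{\mu^1-\mu^0-\psi_{patt}\}/\mathbb{P}(A)$).

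The second step is the consistency of $\hat{V}_{catt}$. Here the target is $\Vert\dot\psi^Y\Vert_\mathbb{P}^2 = \mathbb{P}((\dot\psi^Y)^2)$ since $\mathbb{P}(\dot\psi^Y)=0$, so it suffices to show $\mathbb{P}_{n}((\hat{\dot\psi}^{Y})^{2}) \to \mathbb{P}((\dot\psi^{Y})^{2})$ in probability, where $\hat{\dot\psi}^{Y}$ is $\dot\psi^{Y}$ with $(\mathbb{P}(A),\pi,\mu^a)$ replaced by their estimates (the sample variance differs from this by a $o_\mathbb{P}(1)$ centering term). The natural route is a Glivenko--Cantelli / uniform-consistency argument: one writes $(\hat{\dot\psi}^Y)^2 - (\dot\psi^Y)^2$ as a product/difference of bounded terms --- $\mathbb{P}_n(A)^{-2}\to\mathbb{P}(A)^{-2}$ by the LLN, $\hat\pi\to\pi$ and $\hat\mu^a\to\mu^a$ in $L_2(\mathbb{P})$ and within Glivenko--Cantelli classes --- and uses Assumptions~\ref{ass::patt_eff}(iv) and~\ref{ass::prox_eff}(iii) for uniform boundedness of $Y-\hat\mu^0$ and $Y-\hat\mu^1$, plus $\varepsilon<\hat\pi<1-\varepsilon$, to control $\dot\psi^Y$ which involves $\pi(X)/\{1-\pi(X)\}$. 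The algebra of bounding $|(\hat{\dot\psi}^Y)^2-(\dot\psi^Y)^2|$ by a sum of terms each of which is a GC-class function times an $o_\mathbb{P}(1)$ factor, together with $\mathbb{P}_n$ applied to a GC class converging uniformly to $\mathbb{P}$, is routine; note that $\dot\psi^Y$ involves $\mu^1$ through $\mu^A(X) = A\mu^1(X)+(1-A)\mu^0(X)$, which is why Assumption~\ref{ass::prox_eff} (on $\hat\mu^1$) is invoked for the variance estimator but not for the point estimator.

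The main obstacle is the delta-method expansion of the ratio in the first step: one must justify that the remainder from linearizing $\mathbb{P}_n[A\{\mu^1-\mu^0\}]/\mathbb{P}_n(A)$ is genuinely $o_\mathbb{P}(n^{-1/2})$. This is standard because the numerator and denominator are both plain empirical averages of bounded (by square-integrability of $Y^0,Y^1$ and boundedness of $\pi$; more carefully, $\mu^1-\mu^0$ is square-integrable so the averages are $\sqrt n$-consistent) functions and the denominator converges to $\mathbb{P}(A)>0$, so Slutsky plus a second-order Taylor remainder of size $O_\mathbb{P}(n^{-1})$ suffices --- but it should be stated cleanly. A subsidiary point worth a sentence is verifying $\Vert\dot\psi^Y\Vert_\mathbb{P}<\infty$: this is immediate from $\Vert\dot\psi\Vert_\mathbb{P}<\infty$ (Proposition~\ref{prop::patt_eff}) and the orthogonal decomposition $\Vert\dot\psi\Vert_\mathbb{P}^2 = \Vert\dot\psi^Y\Vert_\mathbb{P}^2+\Vert\dot\psi^A\Vert_\mathbb{P}^2+\Vert\dot\psi^X\Vert_\mathbb{P}^2$. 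Everything else --- the CLT for $n^{1/2}\mathbb{P}_n(\dot\psi^Y)$ and the probabilistic bookkeeping in the variance step --- parallels the already-established Propositions and requires no new ideas.
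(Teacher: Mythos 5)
Your proposal is correct and follows essentially the same route as the paper: expand around $\psi_{patt}$ via Proposition \ref{prop::patt_eff}, observe that $\dot{\psi}^{A}+\dot{\psi}^{X} = \mathbb{P}(A)^{-1}A\{\mu^{1}-\mu^{0}-\psi_{patt}\}$ cancels $n^{1/2}(\psi_{catt}-\psi_{patt})$ up to $o_{\mathbb{P}}(1)$, and then run a Glivenko--Cantelli argument for $\hat{V}_{catt}$ (with Assumption \ref{ass::prox_eff} needed because $\dot{\psi}^{Y}$ involves $\mu^{1}$ through $\mu^{A}$). The only cosmetic difference is that the paper sidesteps your delta-method remainder entirely by writing the exact identity $\psi_{patt}+\mathbb{P}_{n}[\dot{\psi}^{A}+\dot{\psi}^{X}]-\psi_{catt} = \mathbb{P}_{n}[A\{\mu^{1}-\mu^{0}-\psi_{patt}\}]\bigl(\mathbb{P}[A]^{-1}-\mathbb{P}_{n}[A]^{-1}\bigr)$ and invoking its $O_{\mathbb{P}}(1)\times O_{\mathbb{P}}(n^{-1/2})$ lemma, which is exactly the clean statement of the step you flagged as the main obstacle.
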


Recall that the population effect is identified by $\psi_{patt} = \mathbb{E}(Y \mid A = 1) - \tau$, where $\tau = \mathbb{E}(Y^{0} \mid A=1) = \mathbb{E}\{\mu^{0}(X) \mid A = 1\}$. The remaining two estimands replace $\mathbb{E}(Y \mid A = 1)$ with the sample average $\mathbb{P}_{n}(AY)/\mathbb{P}_{n}(A)$, which is completely determined by the observed data. Thus, our analysis now revolves around the functional $\tau$; its efficient influence function has the orthogonal decomposition
\begin{align*}
    \dot{\tau} &= \dot{\tau}^{Y} + \dot{\tau}^{A}+\dot{\tau}^{X}\\
    &= \frac{\{Y - \mu^{0}(X)\}(1-A)\pi(X)}{\mathbb{P}(A)\{1-\pi(X)\}} + \frac{\{A - \pi(X)\} \{\mu^{0}(X)-\tau\}}{\mathbb{P}(A)} +\frac{\pi(X)\{\mu^{0}(X) - \tau\}}{\mathbb{P}(A)}.
\end{align*}

We return to the sample treatment effect on the treated $\psi_{satt} = \mathbb{P}_{n}[A(Y - Y^{0})] /\mathbb{P}_{n}(A)$ discussed earlier.
Perhaps surprisingly, despite involving both potential outcomes together, the asymptotic variance of $\psi_{satt}$ is point-identified and can be consistently estimated.
\begin{theorem} \label{theo::satt}
    Suppose Assumption \ref{ass::patt_eff} holds. Then
    \begin{equation*}
        n^{1/2}(\hat{\psi} - \psi_{satt}) \rightarrow \mathcal{N}(0,\Vert \dot{\tau}^{Y}\Vert_{\mathbb{P}}^{2} + \mathbb{P}(A)^{-1}\mathbb{E}\{\text{var}(Y^{0} \mid X) \mid A= 1\})
    \end{equation*}
    in weak convergence and
    \begin{equation*}
        \hat{V}_{satt} = \mathbb{P}_{n}\left[\hat{\pi}(X)\frac{(1-A)}{\{1-\hat{\pi}(X)\}^{2}}\left\{\frac{Y - \hat{\mu}^{0}(X)}{\mathbb{P}_{n}(A)}\right\}^{2}\right]
    \end{equation*}
    is a consistent estimator of the asymptotic variance.
\end{theorem}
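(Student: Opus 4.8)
The plan is to split
\begin{equation*}
\hat\psi - \psi_{satt} = (\hat\psi - \psi_{patt}) - (\psi_{satt} - \psi_{patt})
\end{equation*}
and write both pieces as empirical averages of fixed functions of $Z = (Y^{1},Y^{0},A,X)$ on the common probability space. Proposition \ref{prop::patt_eff} handles the first piece: $n^{1/2}(\hat\psi - \psi_{patt}) = n^{1/2}\mathbb{P}_{n}(\dot\psi) + o_{\mathbb{P}}(1)$. Writing $m_{1} = \mathbb{E}(Y\mid A=1)$ and recalling $\tau = \mathbb{E}(Y^{0}\mid A=1)$, a short bookkeeping computation with the displayed decomposition of $\dot\tau$ gives the identity $\dot\psi = \dot m_{1} - \dot\tau$, where $\dot m_{1} = \mathbb{P}(A)^{-1}A(Y - m_{1})$. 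For the second piece, write $\psi_{satt} = \mathbb{P}_{n}(AY)/\mathbb{P}_{n}(A) - \mathbb{P}_{n}(AY^{0})/\mathbb{P}_{n}(A)$ and apply the standard expansion for a ratio of empirical means to each term: the first ratio is asymptotically linear for $m_{1}$ with influence function $\dot m_{1}$, and the second---using $\mathbb{E}(AY^{0}) = \mathbb{E}\{A\mu^{0}(X)\} = \mathbb{P}(A)\tau$ (by ignorability)---is asymptotically linear for $\tau$ with the \emph{full-data} influence function $\varphi = \mathbb{P}(A)^{-1}A(Y^{0} - \tau)$. This step is legitimate even though $Y^{0}$ is latent for treated units, because $\psi_{satt}$ is a functional of $Z$ rather than of the observed data. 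Subtracting, $n^{1/2}(\psi_{satt}-\psi_{patt}) = n^{1/2}\mathbb{P}_{n}(\dot m_{1} - \varphi) + o_{\mathbb{P}}(1)$, whence
\begin{equation*}
n^{1/2}(\hat\psi - \psi_{satt}) = n^{1/2}\mathbb{P}_{n}(\dot\psi - \dot m_{1} + \varphi) + o_{\mathbb{P}}(1) = n^{1/2}\mathbb{P}_{n}(\varphi - \dot\tau) + o_{\mathbb{P}}(1).
\end{equation*}
The summands $\varphi - \dot\tau$ are i.i.d., mean zero, and square-integrable (Assumption \ref{ass::patt_eff}(i) and positivity), so the Lindeberg--L\'evy central limit theorem applies.

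It remains to identify $\Vert\varphi - \dot\tau\Vert_{\mathbb{P}}^{2}$. Using $\dot\tau^{A} + \dot\tau^{X} = \mathbb{P}(A)^{-1}A\{\mu^{0}(X) - \tau\}$ and the fact that $\dot\tau^{Y}$ carries the factor $(1-A)$, on whose support $Y = Y^{0}$, the summand collapses to
\begin{equation*}
\varphi - \dot\tau = \frac{Y^{0} - \mu^{0}(X)}{\mathbb{P}(A)}\left\{A - \frac{(1-A)\pi(X)}{1-\pi(X)}\right\}.
\end{equation*}
Squaring and conditioning on $X$, the cross term vanishes ($A(1-A)=0$), and by strong ignorability $Y^{0}-\mu^{0}(X)$ has conditional variance $\sigma_{0}^{2}(X)$ given either value of $A$, so the conditional second moment equals $\sigma_{0}^{2}(X)[\pi(X) + (1-\pi(X))\pi(X)^{2}/(1-\pi(X))^{2}] = \sigma_{0}^{2}(X)\pi(X)/\{1-\pi(X)\}$. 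Hence $\Vert\varphi-\dot\tau\Vert_{\mathbb{P}}^{2} = \mathbb{P}(A)^{-2}\mathbb{E}[\pi(X)\sigma_{0}^{2}(X)/\{1-\pi(X)\}]$, and splitting $\pi/(1-\pi) = \pi^{2}/(1-\pi) + \pi$ together with $\Vert\dot\tau^{Y}\Vert_{\mathbb{P}}^{2} = \mathbb{P}(A)^{-2}\mathbb{E}[\pi(X)^{2}\sigma_{0}^{2}(X)/\{1-\pi(X)\}]$ and $\mathbb{E}\{\pi(X)\sigma_{0}^{2}(X)\} = \mathbb{P}(A)\mathbb{E}\{\sigma_{0}^{2}(X)\mid A=1\}$ recovers exactly the asymptotic variance in the statement.

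For $\hat V_{satt}$, note $(1-A)\{Y-\hat\mu^{0}(X)\}^{2} = (1-A)\{Y^{0}-\hat\mu^{0}(X)\}^{2}$ and, by ignorability, $\mathbb{E}[(1-A)\{Y^{0}-\mu^{0}(X)\}^{2}\mid X] = \{1-\pi(X)\}\sigma_{0}^{2}(X)$, so the idealized statistic obtained by substituting $(\pi,\mu^{0},\mathbb{P})$ for $(\hat\pi,\hat\mu^{0},\mathbb{P}_{n})$ has mean exactly $\mathbb{P}(A)^{-2}\mathbb{E}[\pi(X)\sigma_{0}^{2}(X)/\{1-\pi(X)\}]$. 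Consistency then follows by: replacing $\mathbb{P}_{n}(A)$ by $\mathbb{P}(A)$ via the law of large numbers and Slutsky; a Glivenko--Cantelli argument for the empirical-process term, the relevant integrand class being uniformly bounded (Assumption \ref{ass::patt_eff}(iv)) and obtained by Lipschitz maps on $[\varepsilon,1-\varepsilon]$ from Donsker/Glivenko--Cantelli classes; and bounding the nuisance-plug-in bias by a constant multiple of $\Vert\hat\pi-\pi\Vert_{\mathbb{P}} + \Vert\hat\mu^{0}-\mu^{0}\Vert_{\mathbb{P}} = o_{\mathbb{P}}(1)$, using that $x\mapsto x/(1-x)^{2}$ is Lipschitz on $[\varepsilon,1-\varepsilon]$ and $\{Y-\hat\mu^{0}\}^{2} - \{Y-\mu^{0}\}^{2} = \{\mu^{0}-\hat\mu^{0}\}\{2Y-\mu^{0}-\hat\mu^{0}\}$ with the boundedness in Assumption \ref{ass::patt_eff}(iv).

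The main obstacle I anticipate is the bookkeeping in the first paragraph---keeping $\hat\psi$ (a function of the observed data only) and $\psi_{satt}$ (a function of the full data including $Y^{0}$) on the same footing so that their difference genuinely is a single empirical average of $\varphi - \dot\tau$; once that representation is established, the central limit theorem, the algebraic collapse of the influence function, and the two moment identifications are routine. A lesser point worth flagging is that the consistency of $\hat V_{satt}$ uses $L_{2}(\mathbb{P})$-consistency of $\hat\mu^{0}$, which we take to be part of Assumption \ref{ass::patt_eff}.
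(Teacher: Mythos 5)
Your proposal is correct and follows essentially the same route as the paper: both anchor on Proposition \ref{prop::patt_eff}, reduce $n^{1/2}(\hat\psi-\psi_{satt})$ to an empirical average of the mean-zero full-data function $\mathbb{P}(A)^{-1}A(Y^{0}-\mu^{0})\mp\dot\tau^{Y}$, exploit $A(1-A)=0$ for the orthogonal variance split, and identify $\mathbb{E}[\pi(1-A)(Y-\mu^{0})^{2}/(1-\pi)^{2}]=\mathbb{E}[\pi\sigma_{0}^{2}/(1-\pi)]$ to justify $\hat V_{satt}$. The only differences are cosmetic bookkeeping (you route through the influence functions of $m_{1}$ and $\tau$ and compute the conditional second moment directly, where the paper manipulates $\psi_{patt}+\mathbb{P}_{n}[\dot\psi]-\psi_{satt}$ and identifies $\mathrm{var}\{A(Y^{0}-\mu^{0})\}$ by inverse probability weighting), and your caveat about needing individual $L_{2}$-consistency of $\hat\mu^{0}$ matches the paper's own reading of Assumption \ref{ass::patt_eff}.
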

Remarkably, it can be deduced from Theorem \ref{theo::satt} that the asymptotic variance of $\psi_{satt}$ potentially exceeds that of the population effect $\psi_{patt}$, as illustrated by the following example.
\begin{example} \label{exa::satt}
    Suppose the treatment effect is homogeneous; that is, $\mu^{1}-\mu^{0}$ is constant. Then $\psi_{patt} = \mu^{1}-\mu^{0}$, and it follows that $\dot{\psi}^{A} = \dot{\psi}^{X} = 0$.  It is straightforward to show that the asymptotic variance of $\psi_{patt}$ is $\Vert\dot{\tau}^{Y}\Vert_{\mathbb{P}}^{2} + \mathbb{P}(A)^{-1}\mathbb{E}\{\text{var}(Y^{1} \mid X) \mid A= 1\}$, which is strictly less than the asymptotic variance of $\psi_{satt}$ if and only if $\mathbb{E}\{\text{var}(Y^{1} \mid X)  - \text{var}(Y^{0} \mid X)\mid A= 1\} < 0$.
\end{example}
\begin{remark}
    A similar argument yields $\mathbb{E}\{\text{var}(Y^{1} \mid X)  - \text{var}(Y^{0} \mid X)\mid A= 1\} \geq 0$ as a general sufficient condition for the asymptotic variance of $\psi_{satt}$ to be upper-bounded by that of $\psi_{patt}$.
\end{remark}
The above phenomenon highlights that the inference for $\psi_{satt}$ requires accounting for a source of uncertainty that is absent when inferring $\psi_{patt}$, namely, the variance of $Y^{0}$ that is unexplained by $X$ when $A=1$. Besides the undesirable possibility of obtaining a wider confidence interval for the sample effect, this is problematic if we wish to interpret $\psi_{satt}$ as a distillation of the information contained in the observed data about $\psi_{patt}$; see \citet{Imbens04} for a related discussion. To resolve this issue, we introduce a new estimand that replaces the awkward $AY^{0}$ term in $\psi_{satt}$ with $A\mu^{0}$.
\begin{definition}
    The mixed average treatment effect on the treated is defined as
\begin{equation*}
    \psi_{matt} = \frac{\mathbb{P}_{n}[A\{Y-\mu^{0}(X)\}]}{\mathbb{P}_{n}(A)}.
\end{equation*}
\end{definition}
\begin{proposition} \label{prop::matt}
    Suppose Assumption \ref{ass::patt_eff} holds, and let $\hat{V}_{matt}$ be the sample variance of $\dot{\tau}^{Y}$ after replacing $(\mathbb{P}(A), \pi, \mu^{0})$ with $(\mathbb{P}_{n}(A), \hat{\pi}, \hat{\mu}^{0})$. Then $n^{1/2}(\hat{\psi} - \psi_{matt}) \rightarrow \mathcal{N}(0,\Vert \dot{\tau}^{Y}\Vert_{\mathbb{P}}^{2})$ in weak convergence, and $\hat{V}_{matt}$ converges to $\Vert \dot{\tau}^{Y}\Vert_{\mathbb{P}}^{2}$ in $\mathbb{P}$-probability. Moreover, we have $\Vert \dot{\tau}^{Y}\Vert_{\mathbb{P}} \leq \Vert \dot{\psi}^{Y}\Vert_{\mathbb{P}}$, so $\psi_{matt} \preceq \psi_{catt} $.
\end{proposition}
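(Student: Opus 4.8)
The plan is to derive the asymptotically linear expansion $n^{1/2}(\hat{\psi} - \psi_{matt}) = -n^{1/2}\mathbb{P}_{n}(\dot{\tau}^{Y}) + o_{\mathbb{P}}(1)$ by combining the expansion of $\hat{\psi}$ from Proposition \ref{prop::patt_eff} with a linearization of $n^{1/2}(\psi_{matt} - \psi_{patt})$, and then to apply the central limit theorem. The algebraic backbone is the pointwise identity
\begin{equation*}
    \dot{\psi}^{Y} = \frac{A\{Y - \mu^{1}(X)\}}{\mathbb{P}(A)} - \dot{\tau}^{Y},
\end{equation*}
which is checked by evaluating both sides on $\{A=1\}$ and on $\{A=0\}$ separately; the two summands on the right have disjoint support in $A$. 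Since those summands are therefore orthogonal in $L_{2}(\mathbb{P})$, Pythagoras gives $\Vert\dot{\psi}^{Y}\Vert_{\mathbb{P}}^{2} = \mathbb{P}(A)^{-2}\mathbb{E}[A\{Y-\mu^{1}(X)\}^{2}] + \Vert\dot{\tau}^{Y}\Vert_{\mathbb{P}}^{2} \geq \Vert\dot{\tau}^{Y}\Vert_{\mathbb{P}}^{2}$; combined with the two weak limits and Proposition \ref{prop::catt}, this immediately yields $\psi_{matt} \preceq \psi_{catt}$.

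For the weak convergence, write $n^{1/2}(\hat{\psi} - \psi_{matt}) = n^{1/2}(\hat{\psi} - \psi_{patt}) - n^{1/2}(\psi_{matt} - \psi_{patt})$, where the first term equals $n^{1/2}\mathbb{P}_{n}(\dot{\psi}) + o_{\mathbb{P}}(1)$ by Proposition \ref{prop::patt_eff}. For the second term, note that $\psi_{patt} = \mathbb{E}(Y\mid A=1) - \tau = \mathbb{P}(A)^{-1}\mathbb{E}[A\{Y-\mu^{0}(X)\}]$, so a standard linearization of the ratio of empirical averages $\mathbb{P}_{n}[A\{Y-\mu^{0}(X)\}]/\mathbb{P}_{n}(A)$ --- using the strong law for numerator and denominator together with $\mathbb{P}_{n}(A) = \mathbb{P}(A) + o_{\mathbb{P}}(1)$ --- gives $n^{1/2}(\psi_{matt}-\psi_{patt}) = n^{1/2}\mathbb{P}_{n}(g) + o_{\mathbb{P}}(1)$ with $g = \mathbb{P}(A)^{-1}A\{Y-\mu^{0}(X) - \psi_{patt}\}$, the integrability of $A\{Y-\mu^{0}(X)\}$ and $\mathbb{P}(g) = 0$ following from Assumption \ref{ass::patt_eff}(i). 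It then remains to verify by direct algebra that $\dot{\psi} = g - \dot{\tau}^{Y}$: using the displayed identity together with $\dot{\psi}^{A} + \dot{\psi}^{X} = \mathbb{P}(A)^{-1}A\{\mu^{1}(X)-\mu^{0}(X) - \psi_{patt}\}$, the $\mu^{1}$ terms cancel and one is left with $\mathbb{P}(A)^{-1}A\{Y-\mu^{0}(X)-\psi_{patt}\} - \dot{\tau}^{Y}$. Hence $n^{1/2}(\hat{\psi} - \psi_{matt}) = -n^{1/2}\mathbb{P}_{n}(\dot{\tau}^{Y}) + o_{\mathbb{P}}(1)$. Because $\dot{\tau}^{Y}$ vanishes on $\{A=1\}$ and equals $(1-A)\{Y^{0}-\mu^{0}(X)\}\pi(X)/[\mathbb{P}(A)\{1-\pi(X)\}]$, strong ignorability gives $\mathbb{P}(\dot{\tau}^{Y}) = 0$, while positivity and square-integrability give $\Vert\dot{\tau}^{Y}\Vert_{\mathbb{P}} < \infty$; the central limit theorem then delivers the normal limit with variance $\Vert\dot{\tau}^{Y}\Vert_{\mathbb{P}}^{2}$.

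For the variance estimator, write $\hat{V}_{matt} = \mathbb{P}_{n}[(\widehat{\dot{\tau}^{Y}})^{2}] - (\mathbb{P}_{n}[\widehat{\dot{\tau}^{Y}}])^{2}$, where $\widehat{\dot{\tau}^{Y}}$ denotes $\dot{\tau}^{Y}$ with $(\mathbb{P}(A),\pi,\mu^{0})$ replaced by $(\mathbb{P}_{n}(A),\hat{\pi},\hat{\mu}^{0})$; it suffices to show $\mathbb{P}_{n}[(\widehat{\dot{\tau}^{Y}})^{2}] \to \mathbb{P}[(\dot{\tau}^{Y})^{2}]$ and $\mathbb{P}_{n}[\widehat{\dot{\tau}^{Y}}] \to 0$ in $\mathbb{P}$-probability, whence $\hat{V}_{matt} \to \Vert\dot{\tau}^{Y}\Vert_{\mathbb{P}}^{2}$ by the mean-zero property. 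The averages built from the true nuisances converge by the strong law, so the task reduces to controlling the plug-in error: using that $t\mapsto t/(1-t)$ is Lipschitz on $[\varepsilon,1-\varepsilon]$, the uniform bounds on $Y-\hat{\mu}^{0}$, $Y-\mu^{0}$ and $\hat{\pi}$, and $\mathbb{P}_{n}(A)=\mathbb{P}(A)+o_{\mathbb{P}}(1)$, this error is dominated by $\Vert\hat{\pi}-\pi\Vert_{\mathbb{P}}$, $\Vert\hat{\mu}^{0}-\mu^{0}\Vert_{\mathbb{P}}$ and a supremum over the Donsker (hence Glivenko-Cantelli) classes of Assumption \ref{ass::patt_eff}(ii), all $o_{\mathbb{P}}(1)$. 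This is exactly the argument used for $\hat{V}_{patt}$, now restricted to the components not involving $\mu^{1}$, which is why no condition on $\hat{\mu}^{1}$ is needed here.

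The main obstacle is the variance-consistency step: one must propagate the nuisance errors through the bounded, Lipschitz transformations defining $\widehat{\dot{\tau}^{Y}}$ and dispatch a uniform empirical-process remainder. Since this parallels the earlier consistency proofs, it is largely bookkeeping rather than a source of new ideas; the algebra underlying the weak limit and the ordering $\psi_{matt}\preceq\psi_{catt}$ is short once the disjoint-support structure of $\dot{\psi}^{Y}$ is noticed.
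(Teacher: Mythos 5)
Your proof is correct and reaches the same expansion $n^{1/2}(\hat{\psi}-\psi_{matt}) = -n^{1/2}\mathbb{P}_{n}(\dot{\tau}^{Y}) + o_{\mathbb{P}}(1)$ as the paper (up to an immaterial sign), but by a somewhat different route. The paper derives the result as a corollary of Theorem \ref{theo::satt}: it writes $\psi_{matt} = \psi_{satt} + \mathbb{P}_{n}[A(Y^{0}-\mu^{0})]/\mathbb{P}_{n}(A)$ and cancels the $A(Y^{0}-\mu^{0})$ term against the corresponding piece of the $\psi_{satt}$ expansion, which makes transparent exactly which variance component of $\psi_{satt}$ the mixed estimand removes. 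You instead linearize $\psi_{matt}-\psi_{patt}$ directly around the representation $\psi_{patt} = \mathbb{P}[A\{Y-\mu^{0}(X)\}]/\mathbb{P}(A)$ and verify the pointwise identity $\dot{\psi} = \mathbb{P}(A)^{-1}A\{Y-\mu^{0}(X)-\psi_{patt}\} - \dot{\tau}^{Y}$ (which checks out on $\{A=0\}$ and $\{A=1\}$ separately). Your route is self-contained---it never invokes the counterfactual $Y^{0}$ on the treated and does not require the heavier bookkeeping of Theorem \ref{theo::satt}---at the cost of losing the direct comparison with $\psi_{satt}$. For the variance estimator, the paper simply points back to the consistency argument for $\Vert\dot{\tau}^{Y}\Vert_{\mathbb{P}}^{2}$ inside the proof of Theorem \ref{theo::satt}; your sketch (Glivenko--Cantelli permanence plus $L_{2}$-convergence of the plug-ins, with no condition on $\hat{\mu}^{1}$ needed) is the same argument. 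The final ordering step is identical to the paper's: the decomposition $\dot{\psi}^{Y} = \mathbb{P}(A)^{-1}A\{Y-\mu^{1}(X)\} - \dot{\tau}^{Y}$ with orthogonality from $A(1-A)=0$ and Pythagoras. One small caveat: your phrase ``using the strong law for numerator and denominator'' undersells what is needed for the linearization---the numerator $\mathbb{P}_{n}[A\{Y-\mu^{0}(X)-\psi_{patt}\}]$ must be handled by the central limit theorem (it is mean-zero and square-integrable) together with the $O_{\mathbb{P}}(n^{-1/2})$ swap of $\mathbb{P}_{n}(A)$ for $\mathbb{P}(A)$ from Lemma \ref{lem::swap}, exactly as in the paper's proof of Proposition \ref{prop::catt}; but this is a matter of wording, not a gap.
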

Combined with our earlier results in Propositions \ref{prop::actt} and \ref{prop::catt}, we deduce that $\psi_{matt}$ is less conservative than both $\psi_{satt}$ and $\psi_{patt}$. This makes $\psi_{matt}$ an attractive alternative to $\psi_{satt}$ that retains a similar literal interpretation. An advantage over $\psi_{catt}$ and the figurative estimands is that estimation of $\mu^{1}$ is not required.

\begin{figure}
\centering
\includegraphics[width=0.75\textwidth]{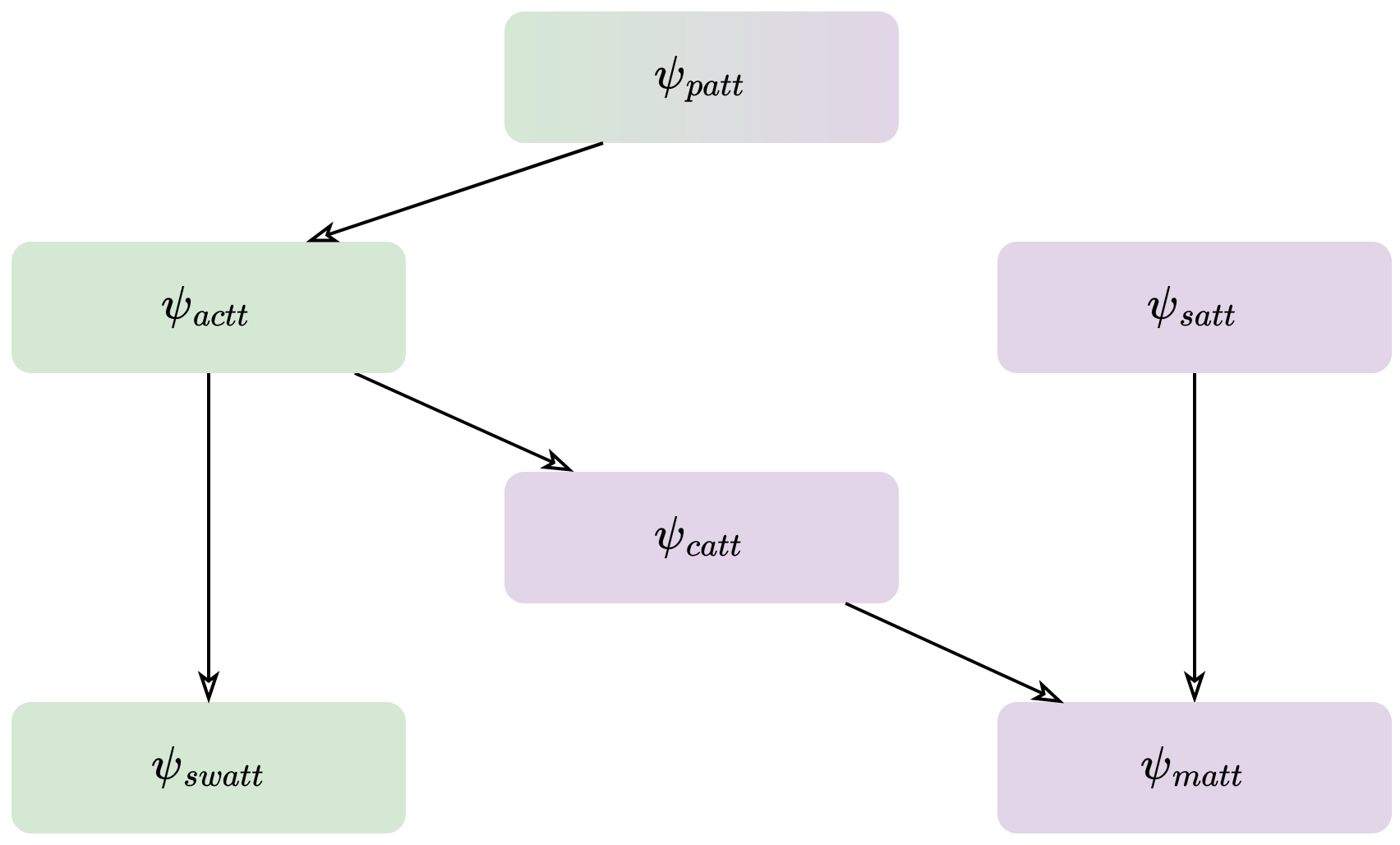}
\vspace{10mm}
\caption{Hasse diagram describing the partial ordering on the estimands;\\ \sqr{proxy} figurative estimands; \sqr{literal} literal estimands; $\psi_{1} \leftarrow \psi_{2}$ means that $\psi_{1} \preceq \psi_{2}$. \label{fig1}}

\end{figure}

\begin{table}[h!]
\begin{center}
\resizebox{.8\textwidth}{!}
{	\begin{tabular}{|c|l|}
	\hline
		Estimand &  Asymptotic variance  \\ 
		\hline
		$\psi_{patt}$ & $\Vert \dot{\psi}^{Y} \Vert^{2}_{\mathbb{P}}+\Vert \dot{\psi}^{A} \Vert^{2}_{\mathbb{P}}+\Vert \dot{\psi}^{X} \Vert^{2}_{\mathbb{P}}$\\ 
		$\psi_{actt}$  & $\Vert \dot{\psi}^{Y} \Vert^{2}_{\mathbb{P}}+\Vert \dot{\psi}^{A} \Vert^{2}_{\mathbb{P}}$\\ 
        $\psi_{swatt}$  & $\Vert \dot{\psi}^{Y} \Vert^{2}_{\mathbb{P}}+\Vert \dot{\psi}^{A} \Vert^{2}_{\mathbb{P}}- \mathbb{P}(A)^{-2}\mathbb{E}\{\pi(X)^{2}\text{var}(Y^{1}-Y^{0} \mid X)\}$\\ 
        $\psi_{catt}$ & $\Vert \dot{\psi}^{Y} \Vert^{2}_{\mathbb{P}}$\\ 
        $\psi_{satt}$ & $\Vert \dot{\tau}^{Y} \Vert^{2}_{\mathbb{P}}+ \mathbb{P}(A)^{-1}\mathbb{E}\{\text{var}(Y^{0} \mid X) \mid A= 1\}$\\ 
        $\psi_{matt}$  & $\Vert \dot{\tau}^{Y} \Vert^{2}_{\mathbb{P}}$\\ 
	\hline
	\end{tabular}}
 \end{center}
 \caption{Comparison of asymptotic variances.\label{tablelabel}}
	
\end{table}

Figure \ref{fig1} and Table \ref{tablelabel} summarize our results, describing the partial ordering on our estimands defined by the relation $\preceq$. As discussed in Section \ref{sec::inter}, $\psi_{patt}$ is the unique estimand that enjoys both the literal and figurative interpretations. There exist further estimands that belong in neither class, but we have deemed them to be less practically relevant. Details can be found in the Appendix, which also includes examples to justify why the diagram contains no directed paths between $\psi_{swatt}$ and any of $\{\psi_{catt}, \psi_{satt},\psi_{matt}\}$. We emphasize again that our results are relative to an asymptotically efficient estimator for $\psi_{patt}$. For future work, it would be of interest to either verify that $\hat{\psi}$ is efficient for the sample variants, or to show that there exist more efficient estimators. 

\section*{Acknowledgements}
The author receives funding from Novo Nordisk and thanks Edwin Fong for helpful suggestions that improved the paper.

\bibliographystyle{abbrvnat}
\bibliography{reference}

\appendix

\section{Proofs of results in the main text}

\subsection{Proof of Proposition 1}

By assumption, there exist fixed constants $\delta,\varepsilon > 0$ such that
\begin{equation*}
    \delta < \pi < 1-\delta, \quad \varepsilon < \hat{\pi} < 1-\varepsilon
\end{equation*}
with $\mathbb{P}$-probability 1. Without loss of generality, we can redefine $\delta$ as $\min\{\delta, \varepsilon\}$ so that it can be used to bound both $\pi$ and $\hat{\pi}$ for the sake of convenience.

We start by showing that $\Vert \dot{\psi} \Vert^{2}_{\mathbb{P}} < \infty$, i.e. the semiparametric efficiency bound for $\psi_{patt}$ is finite:
\begin{align*}
    \Vert \dot{\psi} \Vert^{2}_{\mathbb{P}} &= \mathbb{P}[A]^{-2}\mathbb{P}\left[\pi(X)\sigma_{1}^{2}(X)+\frac{\pi(X)^{2}}{1-\pi(X)}\sigma_{0}^{2}(X) + \pi(X)^{2}\{\mu^{1}(X) - \mu^{0}(X) - \psi_{patt}\}^{2}\right]\\
    &\leq \mathbb{P}[A]^{-2}\left(\mathbb{P}[\sigma^{2}_{1}(X)] + \frac{1}{\delta}\mathbb{P}[\sigma^{2}_{0}(X)] + 3\mathbb{P}[\pi(X)^{2}\{\mu^{1}(X)^{2} + \mu^{0}(X)^{2} + \psi^{2}_{patt}\}] \right)\\
    &\lesssim \mathbb{P}[\sigma^{2}_{1}(X)] + \mathbb{P}[\sigma^{2}_{0}(X)] + \mathbb{P}[\mu^{1}(X)^{2}] + \mathbb{P}[\mu^{0}(X)^{2}] + \psi_{patt}^{2}\\
    &= \Vert Y^{1}\Vert^{2}_{\mathbb{P}} + \Vert Y^{0}\Vert^{2}_{\mathbb{P}} + \psi_{patt}^{2} \\
    &< \infty,
\end{align*}
where the boundedness on the final line follows from the assumed square-integrability of $Y^{1}$ and $Y^{0}$.

We will use $\mathbb{G}_{n} = n^{1/2}(\mathbb{P}_{n}-\mathbb{P})$ to denote the empirical process. It is helpful to define
\begin{align*}
    \hat{h}(Z) &= \left(\frac{A-\hat{\pi}(X)}{1-\hat{\pi}(X)}\right)\{Y-\hat{\mu}^{0}(X)\}-A\psi_{patt} \\
    h(Z) &= \left(\frac{A-\pi(X)}{1-\pi(X)}\right)\{Y-\mu^{0}(X)\}-A\psi_{patt} = \mathbb{P}[A]\dot{\psi}.
\end{align*}
Then by the definition of $\hat{\psi}$, we have
\begin{equation*}
    n^{1/2}(\hat{\psi} - \psi_{patt}) = n^{1/2}\frac{\mathbb{P}_{n}[\hat{h}(Z)]}{\mathbb{P}_{n}[A]}.
\end{equation*}
Consider the decomposition
\begin{equation*}
    n^{1/2}\mathbb{P}_{n}[\hat{h}] = \mathbb{G}_{n}[h] + \underbrace{\mathbb{G}_{n}[\hat{h} - h]}_{\circled{1}} + \underbrace{n^{1/2}\mathbb{P}[\hat{h}]}_{\circled{2}},
\end{equation*}
where we have used the fact that $\mathbb{P}[h] = 0$. We will show that terms \circled{1} and \circled{2} are both $o_{\mathbb{P}}(1)$.

For term \circled{1},
\begin{align*}
    \hat{h}(Z) - h(Z) =& \left[\frac{A-\hat{\pi}(X)}{1-\hat{\pi}(X)}\right]\{Y-\hat{\mu}^{0}(X)\} - \left[\frac{A-\pi(X)}{1-\pi(X)}\right]\{Y-\mu^{0}(X)\}\\
    =& \left(1-\frac{1-A}{1-\pi(X)}\right)\{\mu^{0}(X)-\hat{\mu}^{0}(X)\} \\
    &+ (1-A)\left[\frac{1}{1-\pi(X)}-\frac{1}{1-\hat{\pi}(X)}\right]\{Y - \hat{\mu}^{0}(X)\} \\
    =& \left(1-\frac{1-A}{1-\hat{\pi}(X)}\right)\{\mu^{0}(X)-\hat{\mu}^{0}(X)\} \\
    &+ (1-A)\left[\frac{\pi(X)-\hat{\pi}(X)}{(1-\pi(X))(1-\hat{\pi}(X))}\right]\{Y - \hat{\mu}^{0}(X)\}
\end{align*}
so
\begin{equation*}
    \Vert \hat{h} - h \Vert_{\mathbb{P}} \leq \left(1 + \frac{1}{\delta}\right)\Vert \mu^{0} - \hat{\mu}^{0}\Vert_{\mathbb{P}} + \left(\frac{C}{\delta^{2}}\right)\Vert \hat{\pi} - \pi \Vert_{\mathbb{P}},
\end{equation*}
which is $o_{\mathbb{P}}(1)$ by the assumptions of $L_{2}$-convergence of $\mu^{0}$ and $\pi$. Using Assumption \ref{ass::patt_eff} and repeated applications of Lemma \ref{lem::perm_Dons}, the sequence $\hat{h}$ takes values in a fixed $\mathbb{P}$-Donsker class. Then we can apply Lemma 19.24 of \citet{vanderVaart98} to obtain $\mathbb{G}_{n}[\hat{h} - h] = o_{\mathbb{P}}(1)$ as required.

For term \circled{2}, we have
\begin{align*}
    \mathbb{P}[\hat{h}(Z)] =& \mathbb{P}\left[\left(\frac{A-\hat{\pi}(X)}{1-\hat{\pi}(X)}\right)\{Y-\hat{\mu}^{0}(X)\}-\pi(X)\{\mu^{1}(X)-\mu^{0}(X)\}\right]\\
    =& \mathbb{P}\left[\frac{\pi(X)\mu^{1}(X) - \pi(X)\hat{\mu}^{0}(X) - \hat{\pi}(X)[\pi(X)\mu^{1}(X) + (1-\pi(X))\mu^{0}(X)] + \hat{\pi}(X)\hat{\mu}^{0}(X)}{1-\hat{\pi}(X)}\right]\\
    &-\mathbb{P}\left[\pi(X)\{\mu^{1}(X)-\mu^{0}(X)\}\right]\\
    =& \mathbb{P}\left[\frac{ - \pi(X)\hat{\mu}^{0}(X) - \hat{\pi}(X) (1-\pi(X))\mu^{0}(X) + \hat{\pi}(X)\hat{\mu}^{0}(X) + \pi(X)\mu^{0}(X)(1-\hat{\pi}(X))}{1-\hat{\pi}(X)}\right]\\
    =& \mathbb{P}\left[\frac{ - \pi(X)\hat{\mu}^{0}(X) - \hat{\pi}(X)\mu^{0}(X) + \hat{\pi}(X)\hat{\mu}^{0}(X) + \pi(X)\mu^{0}(X))}{1-\hat{\pi}(X)}\right]\\
    =& \mathbb{P}\left[\frac{(\hat{\pi}(X)-\pi(X))(\hat{\mu}^{0}(X)-\mu^{0}(X))}{1-\hat{\pi}(X)}\right]\\
    \leq & \frac{1}{\delta}\Vert \hat{\pi} - \pi \Vert_{\mathbb{P}} \Vert \mu^{0} - \hat{\mu}^{0}\Vert_{\mathbb{P}}\\
    =& o_{\mathbb{P}}(n^{-1/2}),
\end{align*}
where the inequality on the penultimate line is due to Cauchy-Schwarz and the assumed bounding on $\hat{\pi}$. 

Thus, we have shown that 
\begin{equation*}
    n^{1/2}(\hat{\psi} - \psi_{patt}) = \frac{\mathbb{G}_{n}[h] + o_{\mathbb{P}}(1)}{\mathbb{P}_{n}[A]}.
\end{equation*}
Since $\mathbb{P}[\dot{\psi}^{2}] < \infty$, we also have $\mathbb{P}[h^{2}] < \infty$, and the central limit theorem yields $\mathbb{G}_{n}[h] = O_{\mathbb{P}}(1)$. So Lemma \ref{lem::swap} implies that
\begin{equation*}
    n^{1/2}(\hat{\psi} - \psi_{patt}) = \frac{\mathbb{G}_{n}[h] + o_{\mathbb{P}}(1)}{\mathbb{P}[A]} + o_{\mathbb{P}}(1) = \mathbb{G}_{n}[\dot{\psi}] + o_{\mathbb{P}}(1).
\end{equation*}  

\subsection{Proof of Proposition 2}

Let
\begin{equation*}
    \Tilde{h}(Z) = \left(\frac{A-\hat{\pi}(X)}{1-\hat{\pi}(X)}\right)\{Y-\hat{\mu}^{0}(X)\}-A\hat{\psi},
\end{equation*}
which differs from $\hat{h}$ in the proof of Proposition 1 only in replacing $\psi_{patt}$ with $\hat{\psi}$. Since $\hat{\psi}$ takes values in a fixed bounded set, we deduce from Lemma \ref{lem::perm_Dons} and the proof of Proposition 1 that $\Tilde{h}$ also takes values in a fixed $\mathbb{P}$-Donsker class, which is, moreover, a $\mathbb{P}$-Glivenko-Cantelli class. Furthermore, Lemma \ref{lem::perm_GC} implies that $\Tilde{h}^{2}$ takes values in a $\mathbb{P}$-Glivenko-Cantelli class. Thus,
\begin{equation*}
    \mathbb{P}_{n}[\Tilde{h}^{2}] = \mathbb{P}[\Tilde{h}^{2}] + o_{\mathbb{P}}(1).
\end{equation*}
We will show that $\Vert \Tilde{h} - h\Vert_{\mathbb{P}} = o_{\mathbb{P}}(1)$, which suffices to establish $\mathbb{P}[\Tilde{h}^{2}] = \mathbb{P}[h^{2}] + o_{\mathbb{P}}(1)$ by Lemma \ref{lem::l2}. As mentioned above, we have $\Tilde{h} = \hat{h} + A(\psi_{patt} - \hat{\psi})$. Hence, Minkowski's inequality yields
\begin{equation*}
    \Vert \Tilde{h} - h\Vert_{\mathbb{P}} \leq \Vert \hat{h} - h\Vert_{\mathbb{P}} + \Vert A(\hat{\psi} - \psi_{patt})\Vert_{\mathbb{P}}.
\end{equation*}
The first term on the right-hand side of the inequality was already shown to be $o_{\mathbb{P}}(1)$ in the proof of Proposition 1. And the second term is upper-bounded by $|\hat{\psi} - \psi_{patt}|$, which is $o_{\mathbb{P}}(1)$ by Proposition 1. A similar argument yields $\mathbb{P}_{n}[\Tilde{h}] = \mathbb{P}[h] + o_{\mathbb{P}}(1) = o_{\mathbb{P}}(1)$.

Finally, we can write
\begin{equation*}
    \hat{V}_{patt} = \frac{\mathbb{P}_{n}[\Tilde{h}^{2}] - \mathbb{P}_{n}[\Tilde{h}]^{2}}{\mathbb{P}_{n}[A]^{2}}.
\end{equation*}
Two applications of Lemma \ref{lem::swap} yields
\begin{equation*}
    \hat{V}_{patt} = \frac{\mathbb{P}[h^{2}]}{\mathbb{P}[A]^{2}}+ \p1 = \mathbb{P}[\dot{\psi}^{2}] + \p1.
\end{equation*}

\subsection{Proof of Proposition 3}

Using Proposition 1, we can start with $n^{1/2}(\hat{\psi} - \psi_{patt}) = n^{1/2}\mathbb{P}_{n}[\dot{\psi}] + o_{\mathbb{P}}(1)$, so
\begin{equation*}
    n^{1/2}(\hat{\psi} - \psi_{actt}) = n^{1/2}(\psi_{patt} + \mathbb{P}_{n}[\dot{\psi}^{Y} + \dot{\psi}^{A}] + \mathbb{P}_{n}[\dot{\psi}^{X}] - \psi_{actt}) + \p1.
\end{equation*}
Thus, it suffices to show that $n^{1/2}(\psi_{patt} + \mathbb{P}_{n}[\dot{\psi}^{X}] - \psi_{actt}) = \p1$ since the remaining term $n^{1/2}\mathbb{P}_{n}[\dot{\psi}^{Y} + \dot{\psi}^{A}]$ converges weakly to the desired normal distribution. Some simple manipulations yield
\begin{equation*}
    n^{1/2}(\psi_{patt} + \mathbb{P}_{n}[\dot{\psi}^{X}] - \psi_{actt}) = \frac{n^{1/2}(\mathbb{P}[A]-\mathbb{P}_{n}[\pi])}{\mathbb{P}[A]}\left\{\psi_{patt} - \frac{\mathbb{P}_{n}[\pi(\mu^{1} - \mu^{0})]}{\mathbb{P}_{n}[\pi]}\right\}.
\end{equation*}
Since $\mathbb{P}[A] = \mathbb{P}[\pi]$, the central limit theorem implies that $n^{1/2}(\mathbb{P}[A]-\mathbb{P}_{n}[\pi])$ is $\Op1$. By the weak law of large numbers,
\begin{align*}
    \mathbb{P}_{n}[\pi(\mu^{1} - \mu^{0})] &\rightarrow \mathbb{P}[\pi(\mu^{1} - \mu^{0})] \\
    \mathbb{P}_{n}[\pi] &\rightarrow \mathbb{P}[\pi]
\end{align*}
in $\mathbb{P}$-probability. Combining these with Slutsky's lemma and using the figurative representation of $\psi_{patt}$, we deduce that the term in curly brackets is $\p1$. Using Slutsky's lemma again, we have that $n^{1/2}(\psi_{patt} + \mathbb{P}_{n}[\dot{\psi}^{X}] - \psi_{actt})$ converges weakly to zero. Since zero is a constant, we also have convergence in $\mathbb{P}$-probability.

Now we proceed to variance estimation. Define
\begin{align*}
    \Tilde{h}_{Y} &= (Y - \hat{\mu}^{A})\left\{A - \frac{(1-A)\hat{\pi}}{1-\hat{\pi}}\right\}\\
    \Tilde{h}_{A} &= (A - \hat{\pi})\{\hat{\mu}^{1} - \hat{\mu}^{0} - \hat{\psi}\}\\
    \Tilde{h}_{X} &= \hat{\pi}\{\hat{\mu}^{1} - \hat{\mu}^{0} - \hat{\psi}\}.
\end{align*}
It is clear that the above expressions sum to $\Tilde{h}$ as defined in the proof of Proposition 2. The functions $h_{Y}$, $h_{A}$ and $h_{X}$ are defined similarly, replacing the estimators of the parameters with their true values. Assumptions \ref{ass::patt_eff} and \ref{ass::prox_eff}, along with repeated applications of Lemma \ref{lem::perm_GC}, imply that each of $(\Tilde{h}_{Y}, \Tilde{h}_{A}, \Tilde{h}_{X})$ takes values in a fixed $\mathbb{P}$-Glivenko-Cantelli class. This is also the case for $\Tilde{h}_{Y} + \Tilde{h}_{A}$ and $(\Tilde{h}_{Y} + \Tilde{h}_{A})^{2}$ by Lemma \ref{lem::perm_GC} again. Thus, 
\begin{align*}
\mathbb{P}_{n}[\Tilde{h}_{Y} + \Tilde{h}_{A}] &= \mathbb{P}[\Tilde{h}_{Y} + \Tilde{h}_{A}] + \p1\\
    \mathbb{P}_{n}[(\Tilde{h}_{Y} + \Tilde{h}_{A})^{2}] &= \mathbb{P}[(\Tilde{h}_{Y} + \Tilde{h}_{A})^{2}] + \p1.
\end{align*}
We wish to establish $\Vert (\Tilde{h}_{Y} + \Tilde{h}_{A}) - (h_{Y} + h_{A}) \Vert_{\mathbb{P}} = \p1$. But we have already established that $\Vert \Tilde{h} - h \Vert_{\mathbb{P}} = \p1$ in the proof of Proposition 2, so by Minkowski's inequality, it is sufficient to show that $\Vert \Tilde{h}_{X} - h_{X} \Vert_{\mathbb{P}} = \p1$, which is comparatively simple. We have
\begin{align*}
    \Vert \Tilde{h}_{X} - h_{X} \Vert_{\mathbb{P}} &= \Vert \hat{\pi}(\hat{\mu}^{1} - \hat{\mu}^{0} - \hat{\psi}) - \pi(\mu^{1}-\mu^{0} - \psi_{patt})\Vert_{\mathbb{P}} \\ 
    &\leq \Vert \pi\{(\hat{\mu}^{1} - \hat{\mu}^{0} - \hat{\psi}) - (\mu^{1}-\mu^{0} - \psi_{patt})\}\Vert_{\mathbb{P}}\\
    &+ \Vert (\hat{\pi} - \pi)(\hat{\mu}^{1} - \hat{\mu}^{0} - \hat{\psi})\Vert_{\mathbb{P}}.
\end{align*}
For the first term in the upper-bound,
\begin{equation*}
    \Vert \pi\{(\hat{\mu}^{1} - \hat{\mu}^{0} - \hat{\psi}) - (\mu^{1}-\mu^{0} - \psi_{patt})\}\Vert_{\mathbb{P}} \leq \Vert \hat{\mu}^{1} - \mu^{1} \Vert_{\mathbb{P}} + \Vert \hat{\mu}^{0} - \mu^{0} \Vert_{\mathbb{P}} + |\hat{\psi} - \psi_{patt}| = \p1.
\end{equation*}
For the second term, first consider
\begin{align*}
    |\hat{\mu}^{1} - \hat{\mu}^{0} - \hat{\psi}|  &= |\hat{\mu}^{1} - Y + Y -\hat{\mu}^{0} - \hat{\psi} + \psi_{patt} - \psi_{patt}| \\ 
    &\leq 2C + |\hat{\psi} - \psi_{patt}| + \psi_{patt}
\end{align*}
with $\mathbb{P}$-probability 1 by Assumptions \ref{ass::patt_eff} and \ref{ass::prox_eff}. Hence,
\begin{equation*}
    \Vert (\hat{\pi} - \pi)(\hat{\mu}^{1} - \hat{\mu}^{0} - \hat{\psi})\Vert_{\mathbb{P}} \leq  \Vert \hat{\pi} - \pi \Vert_{\mathbb{P}}(2C + |\hat{\psi} - \psi_{patt}| + \psi_{patt}) = \p1.
\end{equation*}
Now, Lemma \ref{lem::l2} implies that $\mathbb{P}[(\Tilde{h}_{Y} + \Tilde{h}_{A})^{2}] = \mathbb{P}[(h_{Y} + h_{A})^{2}] + \p1$ and $\mathbb{P}[\Tilde{h}_{Y} + \Tilde{h}_{A}] = \mathbb{P}[h_{Y} + h_{A})] + \p1 = \p1$.
Finally, we have
\begin{equation*}
    \hat{V}_{actt} = \frac{\mathbb{P}_{n}[(\Tilde{h}_{Y} + \Tilde{h}_{A})^{2}] - \mathbb{P}_{n}[\Tilde{h}_{Y} + \Tilde{h}_{A}]^{2}}{\mathbb{P}_{n}[A]^{2}} = \frac{\mathbb{P}[(h_{Y}+h_{A})^{2}]}{\mathbb{P}[A]^{2}} + \p1 =\Vert \dot{\psi}^{Y} + \dot{\psi}^{A}\Vert_{\mathbb{P}}^{2} + \p1.
\end{equation*}

\subsection{Proof of Theorem 1}

Using Proposition 1, we can start with $n^{1/2}(\hat{\psi} - \psi_{patt}) = \mathbb{G}_{n}[\dot{\psi}] + o_{\mathbb{P}}(1)$. Then we have
\begin{equation*}
    n^{1/2}(\hat{\psi} - \psi_{swatt}) = \mathbb{G}_{n}[\dot{\psi}] -\frac{n^{1/2}\mathbb{P}_{n}[\pi(Y^{1} - Y^{0} - \psi_{patt})]}{\mathbb{P}_{n}[\pi]} + \p1.
\end{equation*}
Consider the numerator in the second term on the right. First, the expectation with respect to $\mathbb{P}$ is
\begin{align*}
    \mathbb{P}[\pi(Y^{1}-Y^{0} - \psi_{patt})] &= \mathbb{P}[\pi(\mu^{1} - \mu^{0})] - \mathbb{P}[A]\psi_{patt} \\
    &= \mathbb{P}[A(Y^{1} - Y^{0})] - \mathbb{P}[A]\psi_{patt}\\
    &= 0.
\end{align*}
The first equality is due to the tower property with conditioning on $X$. The second equality uses the tower property in reverse combined with the ignorability in Assumption \ref{ass::strong_ign_pos}. Next, the variance is also finite:
\begin{align*}
    \Vert\pi(Y^{1}-Y^{0} - \psi_{patt})\Vert^{2}_{\mathbb{P}} &\leq 3(\Vert Y^{1} \Vert^{2}_{\mathbb{P}}+ \Vert Y^{0} \Vert^{2}_{\mathbb{P}} + \psi_{patt}^{2}) < \infty
\end{align*}
by the assumed square-integrability of $Y^{1}$ and $Y^{0}$. So the central limit theorem implies that $n^{1/2}\mathbb{P}_{n}[\pi(Y^{1} - Y^{0} - \psi_{patt})] = \Op1$ and we deduce from Lemma \ref{lem::swap} that
\begin{equation*}
    \frac{n^{1/2}\mathbb{P}_{n}[\pi(Y^{1} - Y^{0} - \psi_{patt})]}{\mathbb{P}_{n}[\pi]} = \frac{n^{1/2}\mathbb{P}_{n}[\pi(Y^{1} - Y^{0} - \psi_{patt})]}{\mathbb{P}[A]} + \p1.
\end{equation*}
Putting everything together yields
\begin{align*}
    n^{1/2}(\hat{\psi} - \psi_{swatt}) &= \mathbb{G}_{n}[\dot{\psi}] - \frac{n^{1/2}\mathbb{P}_{n}[\pi(Y^{1} - Y^{0} - \psi_{patt})]}{\mathbb{P}[A]} + \p1 \\
    &= n^{1/2}\mathbb{P}\left[\dot{\psi}^{Y}+ \dot{\psi}^{A} - \frac{\pi\{Y^{1} - Y^{0} - \mu^{1}(X) + \mu^{0}(X)\}}{\mathbb{P}[A]}\right] + \p1.
\end{align*}
Let $h_{\pi} = \pi\{Y^{1} - Y^{0} - \mu^{1}(X) + \mu^{0}(X)\}$. Write
\begin{align*}
    \dot{\psi}^{Y} &= \frac{h_{\pi}}{\mathbb{P}[A]} + \frac{A-\pi(X)}{\mathbb{P}[A]}\left[Y^{1}-\mu^{1}(X)+\frac{\pi(X)}{1-\pi(X)}\{Y^{0} - \mu^{0}(X)\}\right],
\end{align*}
so
\begin{equation*}
    \langle \dot{\psi}^{Y}, h_{\pi}\rangle = \frac{1}{\mathbb{P}[A]}\left(\Vert h_{\pi}\Vert_{\mathbb{P}}^{2} + \left\langle A- \pi, h_{\pi}\left[Y^{1}-\mu^{1}(X)+\frac{\pi(X)}{1-\pi(X)}\{Y^{0} - \mu^{0}(X)\}\right]\right\rangle\right).
\end{equation*}
The second term on the right, expressed by the inner product, is equal to zero by applying the tower property conditioning on $(Y^{1}, Y^{0}, X)$ and the fact that $\mathbb{E}(A - \pi(X) \mid Y^{1}, Y^{0}, X) = \mathbb{E}(A - \pi(X) \mid X) = 0$ by strong ignorability. A similar argument yields $\langle \dot{\psi}^{A}, h_{\pi}\rangle = 0$. We deduce that
\begin{equation*}
    \left\Vert\dot{\psi}^{Y}+ \dot{\psi}^{A} - \frac{\pi\{Y^{1} - Y^{0} - \mu^{1}(X) - \mu^{0}(X)\}}{\mathbb{P}[A]}\right\Vert^{2}_{\mathbb{P}} = \Vert \dot{\psi}^{Y}\Vert_{\mathbb{P}}^{2}+ \Vert \dot{\psi}^{A}\Vert_{\mathbb{P}}^{2} - \frac{\Vert h_{\pi}\Vert_{\mathbb{P}}^{2}}{\mathbb{P}[A]^{2}},
\end{equation*}
and $\Vert h_{\pi} \Vert_{\mathbb{P}}^{2} = \mathbb{E}\{\pi(X)^{2}\text{var}(Y^{1}-Y^{0} \mid X)\}$ by applying the tower property conditioning on $X$.

\subsection{Proof of Proposition 4}
Given Assumptions \ref{ass::patt_eff} and \ref{ass::cond_var}, we can apply Lemma \ref{lem::perm_GC} to deduce that $\hat{\pi}^{2}(\hat{\sigma}_{1} - \hat{\sigma}_{0})^{2}$ takes values in a fixed $\mathbb{P}$-Glivenko-Cantelli class, so
\begin{equation*}
    \mathbb{P}_{n}[\hat{\pi}^{2}(\hat{\sigma}_{1} - \hat{\sigma}_{0})^{2}] = \mathbb{P}[\hat{\pi}^{2}(\hat{\sigma}_{1} - \hat{\sigma}_{0})^{2}] + \p1.
\end{equation*}

By Lemmas \ref{lem::swap} and \ref{lem::l2}, it now suffices to show that $\hat{\pi}(\hat{\sigma}_{1} - \hat{\sigma}_{0})$ converges in $L_{2}(\mathbb{P})$ to $\pi(\sigma_{1} - \sigma_{0})$:
\begin{align*}
    \Vert \hat{\pi}(\hat{\sigma}_{1} - \hat{\sigma}_{0}) - \pi(\sigma_{1} - \sigma_{0})\Vert_{\mathbb{P}} &\leq \Vert (\hat{\pi} - \pi)(\hat{\sigma}_{1} - \hat{\sigma}_{0})\Vert_{\mathbb{P}} + \Vert \pi\{(\hat{\sigma}_{1} - \hat{\sigma}_{0}) - (\sigma_{1} - \sigma_{0})\}\Vert_{\mathbb{P}} \\
    &\lesssim \Vert \hat{\pi} - \pi \Vert + \Vert \hat{\sigma}_{1} - \sigma_{1} \Vert_{\mathbb{P}} + \Vert \hat{\sigma}_{0} - \sigma_{0} \Vert_{\mathbb{P}} \\
    &= \p1,
\end{align*}
where we have used the uniform boundedness of the classes for $\hat{\sigma}_{1}$ and $\hat{\sigma}_{0}$ for the inequality on the second line.

\subsection{Proof of Proposition 5}

Using Proposition 1, we can start with $n^{1/2}(\hat{\psi} - \psi_{patt}) = n^{1/2}\mathbb{P}_{n}[\dot{\psi}] + o_{\mathbb{P}}(1)$, so
\begin{equation*}
    n^{1/2}(\hat{\psi} - \psi_{catt}) = n^{1/2}(\psi_{patt} + \mathbb{P}_{n}[\dot{\psi}^{Y}] + \mathbb{P}_{n}[\dot{\psi}^{A}+\dot{\psi}^{X}] - \psi_{catt}) + \p1.
\end{equation*}
Thus, it suffices to show that $n^{1/2}(\psi_{patt} + \mathbb{P}_{n}[\dot{\psi}^{A}+\dot{\psi}^{X}] - \psi_{catt}) = \p1$ since the remaining term $n^{1/2}\mathbb{P}_{n}[\dot{\psi}^{Y}]$ converges weakly to the desired normal distribution. We can write
\begin{equation*}
    n^{1/2}(\psi_{patt} + \mathbb{P}_{n}[\dot{\psi}^{A}+\dot{\psi}^{X}] - \psi_{catt}) = n^{1/2} \mathbb{P}_{n}[A\{\mu^{1}-\mu^{0} - \psi_{patt}\}]\left(\frac{1}{\mathbb{P}[A]} - \frac{1}{\mathbb{P}_{n}[A]}\right).
\end{equation*}
Since $A\{\mu^{1}-\mu^{0} - \psi_{patt}\}$ has mean zero and finite variance, the central limit theorem implies that $n^{1/2} \mathbb{P}_{n}[A\{\mu^{1}-\mu^{0} - \psi_{patt}\}] = \Op1$. Then we apply Lemma \ref{lem::swap} to deduce the result.

For the variance estimation, we use the same notation as the proof of Proposition 3. Arguing similarly to the proof of Proposition 3, we want to establish
\begin{align*}
\mathbb{P}_{n}[\Tilde{h}_{Y}] &= \mathbb{P}[h_{Y}] + \p1 = \p1\\
    \mathbb{P}_{n}[\Tilde{h}_{Y}^{2}] &= \mathbb{P}[h_{Y}^{2}] + \p1,
\end{align*}
and it is sufficient to show that $\Vert \Tilde{h}_{A} + \Tilde{h}_{X} - h_{A} - h_{X}\Vert_{\mathbb{P}} = \p1$. Then
\begin{align*}
    \Vert \Tilde{h}_{A} + \Tilde{h}_{X} - h_{A} - h_{X}\Vert_{\mathbb{P}} &= \Vert A\{(\hat{\mu}^{1} - \hat{\mu}^{0} - \hat{\psi}) - (\mu^{1}-\mu^{0} - \psi_{patt})\}\Vert_{\mathbb{P}}\\
    &\leq \Vert \hat{\mu}^{1} - \mu^{1} \Vert_{\mathbb{P}} + \Vert \hat{\mu}^{0} - \mu^{0} \Vert_{\mathbb{P}} + |\hat{\psi} - \psi_{patt}|\\
    &= \p1.
\end{align*}
Putting the ingredients together with Lemma \ref{lem::swap} yields
\begin{equation*}
    \hat{V}_{catt} = \frac{\mathbb{P}_{n}[\Tilde{h}_{Y}^{2}] - \mathbb{P}_{n}[\Tilde{h}_{Y}]^{2}}{\mathbb{P}_{n}[A]^{2}} = \frac{\mathbb{P}[h_{Y}^{2}]}{\mathbb{P}[A]^{2}} + \p1 =\Vert \dot{\psi}^{Y}\Vert_{\mathbb{P}}^{2} + \p1.
\end{equation*}

\subsection{Proof of Theorem 2}

Using Proposition 1, we can start with $n^{1/2}(\hat{\psi} - \psi_{patt}) = n^{1/2}\mathbb{P}_{n}[\dot{\psi}] + o_{\mathbb{P}}(1)$, so
\begin{align*}
    n^{1/2}(\hat{\psi} - \psi_{satt}) &= n^{1/2}(\psi_{patt} + \mathbb{P}_{n}[\dot{\psi}] - \psi_{satt}) + \p1\\
    &= \frac{n^{1/2}}{\mathbb{P}_{n}[A]}\mathbb{P}_{n}\left[\frac{A - \pi(X)}{\{1-\pi(X)\}}\{Y - \mu^{0}(X)\} - A(Y-Y^{0})\right]+\p1\\
    &= \frac{n^{1/2}}{\mathbb{P}_{n}[A]}\mathbb{P}_{n}\left[\frac{\{Y - \mu^{0}(X)\}(1-A)\pi(X)}{\{1-\pi(X)\}} - A\{\mu^{0}(X)-Y^{0}\}\right]+ \p1\\
    &= \frac{n^{1/2}}{\mathbb{P}_{n}[A]}\mathbb{P}_{n}[\mathbb{P}[A]\dot{\tau}^{Y} - A\{\mu^{0}(X)-Y^{0}\}]+\p1.
\end{align*}
It is clear that $\langle \dot{\tau}^{Y}, A\{\mu^{0}(X)-Y^{0}\}\rangle = 0$ because $A(1-A) = 0$ with $\mathbb{P}$-probability 1. Therefore, $n^{1/2}\mathbb{P}_{n}[\mathbb{P}[A]\dot{\tau}^{Y} - A\{\mu^{0}(X)-Y^{0}\}]$ converges weakly to $\mathcal{N}(0, \mathbb{P}_{n}[A]^{2}\Vert \dot{\tau}^{Y} \Vert_{\mathbb{P}}^{2} + \text{var}\{A(Y^{0} - \mu^{0})\})$. Then we apply Lemma \ref{lem::swap} and Slutsky's lemma to obtain the required limiting normal distribution.

For the variance estimation, we begin by showing that we can consistently estimate $\Vert \dot{\tau}^{Y} \Vert_{\mathbb{P}}^{2}$. Consider
\begin{align*}
    \frac{\hat{\pi}(1-A)}{1-\hat{\pi}}(Y^{0}- \hat{\mu}^{0}) - \frac{\pi(1-A)}{1-\pi}(Y - \mu^{0})&= (\mu^{0} - \hat{\mu}^{0})\frac{\pi(1-A)}{1-\pi}\\
    &+ (1-A)\left\{\frac{\hat{\pi}}{1-\hat{\pi}} - \frac{\pi}{1-\pi}\right\}(Y - \hat{\mu}^{0})\\
    &= (\mu^{0} - \hat{\mu}^{0})\frac{\pi(1-A)}{1-\pi}\\
    &+ (1-A)\frac{\hat{\pi} - \pi}{(1-\hat{\pi})(1-\pi)}(Y - \hat{\mu}^{0}).
\end{align*}
Thus, 
\begin{equation*}
    \left\Vert \frac{\hat{\pi}(1-A)}{1-\hat{\pi}}(Y- \hat{\mu}^{0}) - \frac{\pi(1-A)}{1-\pi}(Y - \mu^{0})\right\Vert_{\mathbb{P}} \leq \frac{1}{\delta} \Vert \hat{\mu}^{0} - \mu^{0} \Vert_{\mathbb{P}} + \frac{C}{\delta^{2}}\Vert \hat{\pi} - \pi \Vert_{\mathbb{P}} = \p1.
\end{equation*}
By applying similar arguments to the proof of Proposition 2, we obtain
\begin{equation*}
    \mathbb{P}_{n}\left[\frac{(1-A)\hat{\pi}(X)^{2}}{\{1-\hat{\pi}(X)\}^{2}}\left\{\frac{Y - \hat{\mu}^{0}(X)}{\mathbb{P}_{n}[A]}^{2}\right\}\right] \rightarrow \Vert \dot{\tau}^{Y}\Vert_{\mathbb{P}}^{2}
\end{equation*}
in $\mathbb{P}$-probability.

Next we show that $\text{var}\{A(Y^{0} - \mu^{0})\}$ is identified. By the tower property and ignorability, we have
\begin{equation*}
    \mathbb{E}\{A(Y^{0} - \mu^{0})\} = \mathbb{E}[\mathbb{E}\{A(Y^{0} - \mu^{0}) \mid X\}] = \mathbb{E}[\pi\mu^{0} - \pi\mu^{0}] = 0.
\end{equation*}
Then
\begin{align*}
    \text{var}\{A(Y^{0} - \mu^{0})\} &= \mathbb{E}\{A^{2}(Y^{0} - \mu^{0})^{2}\}\\
    &= \mathbb{E}[\mathbb{E}\{A(Y^{0}-\mu^{0})^{2} \mid X\}]\quad \text{(tower property)}\\
    &= \mathbb{E}\{\pi(Y^{0} - \mu^{0})^{2}\}\quad \text{(ignorability)}\\
    &= \mathbb{E}\left[\frac{\pi(1-A)}{1-\pi}(Y^{0} - \mu^{0})^{2}\right]\quad \text{(ignorability)}\\
    &= \mathbb{E}\left[\frac{\pi(1-A)}{1-\pi}(Y - \mu^{0})^{2}\right].
\end{align*}
Now we want to prove that
\begin{equation*}
    \mathbb{P}\left[\frac{\hat{\pi}(1-A)}{1-\hat{\pi}}(Y- \hat{\mu}^{0})^{2} - \frac{\pi(1-A)}{1-\pi}(Y - \mu^{0})^{2}\right] = \p1.
\end{equation*}
We will do this by leveraging the earlier computations. Consider
\begin{align*}
    \frac{\hat{\pi}(1-A)}{1-\hat{\pi}}(Y- \hat{\mu}^{0})^{2} - \frac{\pi(1-A)}{1-\pi}(Y - \mu^{0})^{2}&= \left(\frac{1-\hat{\pi}}{\hat{\pi}}\right)\frac{\hat{\pi}^{2}(1-A)}{(1-\hat{\pi})^{2}}(Y- \hat{\mu}^{0})^{2} - \left(\frac{1-\pi}{\pi}\right)\frac{\pi^{2}(1-A)}{(1-\pi)^{2}}(Y - \mu^{0})^{2}\\
    &= \left(\frac{1-\pi}{\pi}\right)\left\{\frac{\hat{\pi}^{2}(1-A)}{(1-\hat{\pi})^{2}}(Y- \hat{\mu}^{0})^{2} - \frac{\pi^{2}(1-A)}{(1-\pi)^{2}}(Y - \mu^{0})^{2}\right\}\\
    &+ \left(\frac{\pi - \hat{\pi}}{\pi\hat{\pi}}\right)\frac{\hat{\pi}^{2}(1-A)}{(1-\hat{\pi})^{2}}(Y - \hat{\mu}^{0})^{2}.
\end{align*}
Then we have
\begin{align*}
    \mathbb{P}\left[\left|\frac{\hat{\pi}(1-A)}{1-\hat{\pi}}(Y- \hat{\mu}^{0})^{2} - \frac{\pi(1-A)}{1-\pi}(Y - \mu^{0})^{2}\right|\right]&\leq \frac{1}{\delta} \mathbb{P}\left[\left|\frac{\hat{\pi}^{2}(1-A)}{(1-\hat{\pi})^{2}}(Y- \hat{\mu}^{0})^{2} - \frac{\pi^{2}(1-A)}{(1-\pi)^{2}}(Y - \mu^{0})^{2}\right|\right]\\
    &+ \frac{C^{2}}{\delta^{3}}\mathbb{P}[|\pi - \hat{\pi}|].
\end{align*}
The first term on the right is $\p1$ by using the $L_{2}$-convergence established earlier in the proof along with Lemma \ref{lem::l2}. The second term on the right is $\p1$ by the assumed $L_{2}$ convergence of $\hat{\pi}$ and the Cauchy-Schwarz inequality. Moreover, 
\begin{equation*}
    \frac{\hat{\pi}(1-A)}{1-\hat{\pi}}(Y- \hat{\mu}^{0})^{2}
\end{equation*}
takes values in a fixed $\mathbb{P}$-Glivenko-Cantelli class by applying Lemma \ref{lem::perm_GC} and Assumption \ref{ass::prox_eff}. Then by similar arguments to the proof of Proposition 2, 
\begin{equation*}
    \mathbb{P}_{n}\left[\frac{(1-A)\hat{\pi}(X)}{\{1-\hat{\pi}(X)\}}\left\{\frac{Y - \hat{\mu}^{0}(X)}{\mathbb{P}_{n}[A]}^{2}\right\}\right] \rightarrow \mathbb{P}[A]^{-2}\text{var}\{A(Y^{0} - \mu^{0})\}
\end{equation*}
in $\mathbb{P}$-probability. Combining this with the earlier convergence result for $\Vert \dot{\tau}^{Y} \Vert_{\mathbb{P}}^{2}$ completes the proof.

\subsection{Proof of Proposition 6}

We can relate $\psi_{matt}$ to $\psi_{satt}$ by
\begin{equation*}
    \psi_{matt} = \psi_{satt} + \frac{\mathbb{P}_{n}[A(Y^{0} - \mu^{0})]}{\mathbb{P}_{n}[A]}.
\end{equation*}
So we deduce from the proof of Theorem 2 that
\begin{equation*}
    n^{1/2}(\hat{\psi} - \psi_{matt}) = \frac{n^{1/2}\mathbb{P}[A]}{\mathbb{P}_{n}[A]}\mathbb{P}_{n}[\dot{\tau}^{Y}]+\p1.
\end{equation*}
Then the central limit theorem and Lemma \ref{lem::swap} yield the required normal weak limit. The consistency of the variance estimator also follows from the proof of Theorem 2.

For the last statement of the proposition, write
\begin{equation*}
    \dot{\psi}^{Y} = \frac{A(Y - \mu^{1})}{\mathbb{P}[A]} - \dot{\tau}^{Y}.
\end{equation*}
The two terms on the right are orthogonal because $A(1-A)=0$ $\mathbb{P}$-almost surely. Thus,
\begin{equation*}
    \Vert \dot{\psi}^{Y} \Vert_{\mathbb{P}}^{2} = \Vert \dot{\tau}^{Y} \Vert_{\mathbb{P}}^{2} + \mathbb{P}[A]^{-2}\text{var}\{A(Y - \mu^{1})\}.
\end{equation*}

\section{Auxiliary lemmas}

\begin{lemma}
    \label{lem::swap}
    Suppose Assumption \ref{ass::prox_eff} holds. Then
    \begin{align*}
        \frac{1}{\mathbb{P}_{n}[A]} - \frac{1}{\mathbb{P}[A]} &= O_{\mathbb{P}}(n^{-1/2})\\
        \frac{1}{\mathbb{P}_{n}[\pi]} - \frac{1}{\mathbb{P}[A]} &= O_{\mathbb{P}}(n^{-1/2}).
    \end{align*}
\end{lemma}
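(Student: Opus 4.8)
The plan is to reduce both displays to one elementary fact: an empirical average of i.i.d.\ bounded random variables with strictly positive mean is $n^{-1/2}$-consistent for that mean, and its reciprocal inherits the same rate via Slutsky's lemma. Only the positivity half of Assumption~\ref{ass::strong_ign_pos} is genuinely used.

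First I would record the consequence of positivity that makes everything work: Assumption~\ref{ass::strong_ign_pos}(ii) gives $\delta < \pi(X) < 1-\delta$ with $\mathbb{P}$-probability $1$, hence $\mathbb{P}[A] = \mathbb{P}[\pi]$ is a fixed constant lying in $(\delta, 1-\delta)$, in particular bounded away from $0$. Since each $\pi(X_{i}) \in (\delta, 1-\delta)$ almost surely, the average $\mathbb{P}_{n}[\pi]$ also lies in $(\delta, 1-\delta)$ almost surely, so $1/\mathbb{P}_{n}[\pi]$ is always well defined; and $\mathbb{P}_{n}[A] = 0$ only on an event of probability $(1-\mathbb{P}[A])^{n} \to 0$, on whose complement $1/\mathbb{P}_{n}[A]$ is defined.

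Next, for the first display I would apply the central limit theorem to the i.i.d.\ bounded (hence square-integrable) summands $A_{i}$, which have mean $\mathbb{P}[A]$ and variance $\mathbb{P}[A]\{1-\mathbb{P}[A]\} < \infty$: this gives $n^{1/2}(\mathbb{P}_{n}[A] - \mathbb{P}[A]) = O_{\mathbb{P}}(1)$ and, a fortiori, $\mathbb{P}_{n}[A] \cvp \mathbb{P}[A]$. Writing
\[
\frac{1}{\mathbb{P}_{n}[A]} - \frac{1}{\mathbb{P}[A]} = \frac{\mathbb{P}[A] - \mathbb{P}_{n}[A]}{\mathbb{P}_{n}[A]\,\mathbb{P}[A]},
\]
the numerator is $O_{\mathbb{P}}(n^{-1/2})$ by the CLT while the denominator converges in $\mathbb{P}$-probability to $\mathbb{P}[A]^{2} > 0$; Slutsky's lemma then delivers the first claim. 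The second display is the same argument verbatim with $A_{i}$ replaced by $\pi(X_{i})$: these are i.i.d., bounded in $(\delta,1-\delta)$, with common mean $\mathbb{P}[\pi] = \mathbb{P}[A]$, so $n^{1/2}(\mathbb{P}_{n}[\pi] - \mathbb{P}[A]) = O_{\mathbb{P}}(1)$, $\mathbb{P}_{n}[\pi] \cvp \mathbb{P}[A]$, and the identity $1/\mathbb{P}_{n}[\pi] - 1/\mathbb{P}[A] = (\mathbb{P}[A] - \mathbb{P}_{n}[\pi])/(\mathbb{P}_{n}[\pi]\,\mathbb{P}[A])$ closes the argument in the same way.

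There is essentially no obstacle here; the only point needing a word of care is that $1/\mathbb{P}_{n}[A]$ is undefined when no unit is treated, which I would dispose of by observing that this event has probability tending to $0$ (and coincides with the event on which $\hat\psi$ itself is undefined, so it is irrelevant to any asymptotic statement).
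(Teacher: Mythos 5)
Your proposal is correct and follows essentially the same route as the paper: the same algebraic identity for the difference of reciprocals, the central limit theorem applied to the numerator, and Slutsky's lemma to handle the denominator. Your added observations---that only the positivity condition is genuinely needed (the lemma's citation of Assumption~\ref{ass::prox_eff} appears to be a reference slip) and that the event $\mathbb{P}_{n}[A]=0$ has vanishing probability---are sensible refinements the paper leaves implicit.
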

\begin{proof}
    By Assumption \ref{ass::prox_eff}, we have $\mathbb{P}[A] > 0$. Write
    \begin{equation} \label{eqn::sqrt_diff}
        n^{1/2}\left(\frac{1}{\mathbb{P}_{n}[A]} - \frac{1}{\mathbb{P}[A]}\right) = \frac{n^{1/2}(\mathbb{P}[A] - \mathbb{P}_{n}[A])}{\mathbb{P}_{n}[A]\mathbb{P}[A]}.
    \end{equation}
    By the central limit theorem, the numerator $n^{1/2}(\mathbb{P}[A] - \mathbb{P}_{n}[A])$ converges weakly to $\mathcal{N}(0, \mathbb{P}[A](1-\mathbb{P}[A]))$. So Slutsky's lemma implies that (\ref{eqn::sqrt_diff}) converges weakly to $\mathcal{N}(0, \mathbb{P}[A]^{-3}(1-\mathbb{P}[A]))$, which is $O_{\mathbb{P}}(1)$. A similar argument yields the second expression.
\end{proof}

\begin{lemma}
    \label{lem::l2}
    Suppose that $\hat{f}$ is a sequence of random functions taking values in $L_{2}(\mathbb{P})$ such that $\Vert \hat{f} - f \Vert_{\mathbb{P}} = \p1$ for some $f \in L_{2}(\mathbb{P})$. Then 
    \begin{align*}
        \mathbb{P}[|\hat{f}^{2} - f^{2}|] &= \p1 \\
        \mathbb{P}[|\hat{f} - f|] &= \p1.
    \end{align*}
\end{lemma}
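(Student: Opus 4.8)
The plan is to deduce both claims from the Cauchy--Schwarz inequality, using that $\mathbb{P}$ is a probability measure so that $\norm{1}_{\mathbb{P}} = 1$. Throughout, the displayed bounds are to be read as holding for each realization of $\hat{f}$, so the stochastic-order conclusions follow by combining them with the hypothesis.

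First I would handle the second statement. By Cauchy--Schwarz, $\mathbb{P}[\abs{\hat{f} - f}] = \ipr{\abs{\hat{f} - f}}{1} \leq \norm{\hat{f} - f}_{\mathbb{P}}\,\norm{1}_{\mathbb{P}} = \norm{\hat{f} - f}_{\mathbb{P}}$, and the assumption $\norm{\hat{f} - f}_{\mathbb{P}} = \p1$ finishes it.

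For the first statement, factor $\hat{f}^{2} - f^{2} = (\hat{f} - f)(\hat{f} + f)$ and apply Cauchy--Schwarz again to obtain $\mathbb{P}[\abs{\hat{f}^{2} - f^{2}}] \leq \norm{\hat{f} - f}_{\mathbb{P}}\,\norm{\hat{f} + f}_{\mathbb{P}}$. The first factor is $\p1$ by assumption. For the second, the triangle inequality gives $\norm{\hat{f} + f}_{\mathbb{P}} \leq \norm{\hat{f} - f}_{\mathbb{P}} + 2\norm{f}_{\mathbb{P}} = \p1 + 2\norm{f}_{\mathbb{P}} = \Op1$, since $f \in L_{2}(\mathbb{P})$ is fixed. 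Hence the product is $\p1 \cdot \Op1 = \p1$, as required.

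There is no genuine obstacle here; the only point deserving a sentence of care is that $\hat{f}$ is random, so one should establish the pathwise inequalities above first and only then invoke the hypothesis together with the product rule for $o_{\mathbb{P}}$ and $O_{\mathbb{P}}$ terms to conclude.
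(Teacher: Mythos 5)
Your proof is correct and is essentially the paper's argument: both parts rest on Cauchy--Schwarz, and your factorization $\hat{f}^{2}-f^{2}=(\hat{f}-f)(\hat{f}+f)$ followed by the triangle inequality on $\Vert\hat{f}+f\Vert_{\mathbb{P}}$ yields exactly the same bound $\Vert\hat{f}-f\Vert_{\mathbb{P}}^{2}+2\Vert f\Vert_{\mathbb{P}}\Vert\hat{f}-f\Vert_{\mathbb{P}}$ that the paper obtains from the decomposition $(\hat{f}-f)^{2}-2f(f-\hat{f})$. No gaps.
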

\begin{proof}
    For the first objective, we have
    \begin{align*}
        \mathbb{P}[|\hat{f}^{2} - f^{2}|]
        &= \mathbb{P}[|(\hat{f} - f)^{2} - 2f(f - \hat{f})|] \\
        & \leq \Vert \hat{f} - f \Vert_{\mathbb{P}} + 2\Vert f \Vert_{\mathbb{P}} \Vert \hat{f} - f \Vert_{\mathbb{P}}\\
        &= \p1,
    \end{align*}
    where the inequality on the penultimate line used the Minkowski and Cauchy-Schwarz inequalities. For the second objective, we have
    \begin{align*}
        \leq \mathbb{P}[|\hat{f} - f|]
        &\leq \Vert \hat{f} - f \Vert_{\mathbb{P}}\\
        &= \p1
    \end{align*}
    by Cauchy-Schwarz again.
\end{proof}

The remaining lemmas are well-known empirical process results that we state here for convenient reference.

\begin{lemma}[Theorem 2.10.5 of \citet{vanderVaart23}] \label{lem::perm_GC}
Let $\mathcal{F}_{1},\ldots, \mathcal{F}_{k}$ be $\mathbb{P}$-Glivenko-Cantelli classes with integrable envelopes. If $\phi:\mathbb{R}^{k}\rightarrow \mathbb{R}$ is continuous, then $\phi \circ (\mathcal{F}_{1},\ldots, \mathcal{F}_{k})$ is $\mathbb{P}$-Glivenko-Cantelli provided that it has an integrable envelope function.
\end{lemma}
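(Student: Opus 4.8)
The statement is the standard Glivenko--Cantelli preservation theorem (it is Theorem 2.10.5 of \citet{vanderVaart23}), so within the paper it is simply cited; here I record the shape of the usual argument. The plan has two reduction steps followed by one empirical-process input. First, reduce to the case that $\mathcal{F}_1,\ldots,\mathcal{F}_k$ are uniformly bounded, by splitting each function at a large constant and absorbing the tails into the envelopes. Second, with bounded classes the continuous $\phi$ is uniformly continuous on the relevant compact box, so it can be approximated uniformly there by piecewise-linear (or polynomial) maps; one then uses that (i) finite sums and scalar multiples of GC classes are GC trivially, since $(\mathbb{P}_n-\mathbb{P})$ is linear, and (ii) the GC property is preserved under such nice maps. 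Step (ii) carries the real content and is established via the characterization of GC in terms of empirical $L_1$-covering numbers.

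For the truncation step, write $F_i$ for an envelope of $\mathcal{F}_i$ and $F$ for the assumed integrable envelope of $\mathcal{G}:=\phi\circ(\mathcal{F}_1,\ldots,\mathcal{F}_k)$, and fix $M>0$. Let $E_M$ be the \emph{fixed} event $\{F\le M\}\cap\bigcap_i\{F_i\le M\}$. Every $g=\phi(f_1,\ldots,f_k)\in\mathcal{G}$ decomposes as $g=\phi(\tilde f_1,\ldots,\tilde f_k)\Ind_{E_M}+g\,\Ind_{E_M^{c}}$, where $\tilde f_i=(f_i\wedge M)\vee(-M)$ agrees with $f_i$ on $E_M$. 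The tail part is dominated by the fixed integrable function $F\,\Ind_{E_M^{c}}$, so $\sup_{g\in\mathcal{G}}\abs{(\mathbb{P}_n-\mathbb{P})(g\,\Ind_{E_M^{c}})}\le(\mathbb{P}_n+\mathbb{P})(F\,\Ind_{E_M^{c}})\to 2\mathbb{P}(F\,\Ind_{E_M^{c}})$, which falls below any prescribed $\varepsilon$ once $M$ is large, by dominated convergence. On $E_M$ only the restriction of $\phi$ to the compact box $K=[-M,M]^{k}$ is used, $\tilde f_i$ ranges over a uniformly bounded class, and multiplying by the fixed indicator $\Ind_{E_M}$ can be encoded by appending the trivial (singleton, hence GC) class $\{\Ind_{E_M}\}$ as an extra coordinate and the continuous map $(x,t)\mapsto\phi(x)\,t$. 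Hence it suffices to prove the theorem when all $\mathcal{F}_i$ are uniformly bounded, in which case $\mathcal{G}$ is bounded too.

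Assume now each $\mathcal{F}_i$ is bounded. By linearity of $(\mathbb{P}_n-\mathbb{P})$ it remains to treat a single continuous (equivalently, on a compact box, uniformly continuous) map $\phi$. For any finitely supported probability measure $Q$, uniform continuity of $\phi$ on $K$ together with Markov's inequality turns pointwise closeness into $L_1(Q)$ closeness: if $\sum_i\norm{f_i-g_i}_{L_1(Q)}$ is small then $\norm{\phi(f)-\phi(g)}_{L_1(Q)}$ is small, with a modulus not depending on $Q$; hence $N(\varepsilon,\mathcal{G},L_1(Q))\le\prod_i N(\delta(\varepsilon),\mathcal{F}_i,L_1(Q))$. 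Taking $Q=\mathbb{P}_n$, dividing the logarithm by $n$, and invoking the characterization that a uniformly bounded class is $\mathbb{P}$-GC if and only if $n^{-1}\log N(\varepsilon,\cdot,L_1(\mathbb{P}_n))\to 0$ in probability for every $\varepsilon>0$, gives the result. Alternatively: approximate $\phi$ uniformly on $K$ by a polynomial $p$ via Stone--Weierstrass; a polynomial in bounded GC classes is GC by iterating the (independently established) sum and product closures; and since every $g\in\mathcal{G}$ lies within $\varepsilon$ in supremum norm of the member $p(f_1,\ldots,f_k)$ of that polynomial class, $\limsup_n\norm{\mathbb{P}_n-\mathbb{P}}_{\mathcal{G}}^{*}\le\limsup_n\norm{\mathbb{P}_n-\mathbb{P}}_{\{p(\cdot)\}}^{*}+2\varepsilon=2\varepsilon$, so $\mathcal{G}$ is GC.

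The main obstacle is step (ii) for bounded classes: unlike the additivity it is not a one-line manipulation, and both routes above rest on an empirical-process fact I would cite rather than reprove --- the empirical-$L_1$-entropy characterization of Glivenko--Cantelli classes (and, for the polynomial route, the product closure of bounded GC classes, which itself reduces to that characterization). One must also be routinely careful about measurability of the suprema (working with outer expectations, or assuming the classes are suitably measurable, as is standard here) and about the integrable-envelope hypothesis on $\mathcal{G}$, which is precisely what makes the tail term in the truncation step vanish.
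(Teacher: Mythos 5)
The paper offers no proof of this lemma: it is stated purely as a citation to Theorem 2.10.5 of van der Vaart and Wellner (2023), under the heading of ``well-known empirical process results that we state here for convenient reference.'' Your sketch is a faithful outline of the standard proof of that cited theorem --- truncation using the integrable envelopes to reduce to uniformly bounded classes, then uniform continuity of $\phi$ on a compact box converted into an $L_{1}(\mathbb{P}_{n})$-covering-number bound and fed into the empirical-entropy characterization of Glivenko--Cantelli classes --- so there is nothing to fault beyond the ingredients you already explicitly flag as cited rather than reproved (the entropy characterization itself and the measurability/outer-expectation caveats).
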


\begin{lemma}[Examples 2.10.9, 2.10.10 and 2.10.11 of \citet{vanderVaart23}] \label{lem::perm_Dons}
Suppose $\mathcal{F}$ and $\mathcal{G}$ are $\mathbb{P}$-Donsker classes. 
\begin{itemize}
    \item[(i)] If $\sup_{f \in \mathcal{F} \cup \mathcal{G}} |\mathbb{P}[f]| < \infty$, then the set of pairwise sums $\mathcal{F} + \mathcal{G}$ is a $\mathbb{P}$-Donsker class.
    \item[(ii)] If $\mathcal{F}$ and $\mathcal{G}$ are uniformly bounded, then the set of pairwise products $\mathcal{F}\mathcal{G}$ is a $\mathbb{P}$-Donsker class.
    \item[(iii)] If $\sup_{f \in \mathcal{F}}|\mathbb{P}[f]| < \infty$ and $f \geq \delta$ for some $\delta > 0$ for every $f \in \mathcal{F}$, then $1/\mathcal{F} = \{1/f: f\in \mathcal{F}\}$ is a $\mathbb{P}$-Donsker class.
\end{itemize}
    
\end{lemma}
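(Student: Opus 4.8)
The plan is to treat parts (i)--(iii) as three instances of one standard preservation principle and to prove that principle once. The principle: if $\phi:\R^{k}\to\R$ is Lipschitz on a convex set $K$ containing the pointwise ranges of all tuples $(f_{1},\ldots,f_{k})$ with $f_{j}\in\mathcal{F}_{j}$, and each $\mathcal{F}_{j}$ is $\mathbb{P}$-Donsker, then the composite class $\{\phi(f_{1},\ldots,f_{k}):f_{j}\in\mathcal{F}_{j}\}$ is $\mathbb{P}$-Donsker, provided it admits a square-integrable envelope. Granting this, part (i) is the case $\phi(a,b)=a+b$, which is globally $1$-Lipschitz (so $K=\R^{2}$); part (ii) is $\phi(a,b)=ab$, which is Lipschitz only on bounded boxes (with constant $M$ on $[-M,M]^{2}$), explaining why uniform boundedness is imposed; part (iii) is $\phi(a)=1/a$, which is $\delta^{-2}$-Lipschitz on $[\delta,\infty)$ but unbounded near the origin, explaining the hypothesis $f\ge\delta$. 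The side conditions ($\sup_{f\in\mathcal{F}\cup\mathcal{G}}|\mathbb{P}[f]|<\infty$ in (i) and (iii), uniform boundedness in (ii)) are there precisely to furnish a square-integrable envelope for the transformed class: $F_{\mathcal{F}}+F_{\mathcal{G}}$ for the sum, the bounded function $F_{\mathcal{F}}F_{\mathcal{G}}$ for the product, and the constant $\delta^{-1}$ for the reciprocal.

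For the principle itself I would argue through uniform entropy rather than by adding limit processes: the naive route fails because $\Vert(f_{1}+g_{1})-(f_{2}+g_{2})\Vert_{\mathbb{P}}$ being small does not force $\Vert f_{1}-f_{2}\Vert_{\mathbb{P}}$ and $\Vert g_{1}-g_{2}\Vert_{\mathbb{P}}$ to be small individually (indeed $f+g$ does not even determine the pair $(f,g)$), so asymptotic equicontinuity of the summed process is not inherited from that of the summands. Instead, the Lipschitz bound gives, for every finitely discrete probability measure $Q$,
\begin{equation*}
\Vert\phi(f_{1},\ldots,f_{k})-\phi(g_{1},\ldots,g_{k})\Vert_{L_{2}(Q)}\le L\sum_{j=1}^{k}\Vert f_{j}-g_{j}\Vert_{L_{2}(Q)},
\end{equation*}
so a Cartesian product of $\varepsilon/(kL)$-nets of the $\mathcal{F}_{j}$ in $L_{2}(Q)$ yields an $\varepsilon$-net of the composite class, whence its $L_{2}(Q)$-covering numbers, normalised by its envelope, are dominated by a product of the corresponding normalised covering numbers of the $\mathcal{F}_{j}$. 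Taking the supremum over $Q$ shows the composite class has a finite uniform entropy integral whenever the $\mathcal{F}_{j}$ do, and the uniform-entropy form of Donsker's theorem (with the square-integrable envelope in hand) finishes it; the bracketing version is parallel, since a Lipschitz image of a bracket is controlled by a bracket of comparable $L_{2}(\mathbb{P})$-size. The same covering estimate with $Q=\mathbb{P}$ gives total boundedness of the composite class in $L_{2}(\mathbb{P})$, which is needed for tightness of the Gaussian limit.

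The only genuine piece of bookkeeping, rather than a real obstacle, is measurability: Donsker's theorem is invoked for suitably (e.g.\ pointwise) measurable classes, and this must survive the composition. It does: applying the fixed continuous map $\phi$ coordinatewise to a countable pointwise-dense subset of $\mathcal{F}_{1}\times\cdots\times\mathcal{F}_{k}$ produces a countable pointwise-dense subset of the composite class. I would therefore present the argument as a single short lemma --- the Lipschitz-composition preservation result, essentially Theorem 2.10.6 of \citet{vanderVaart23} --- and record parts (i)--(iii) as immediate corollaries, matching Examples 2.10.9--2.10.11 there.
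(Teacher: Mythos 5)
First, a point of comparison: the paper does not prove this lemma at all --- it is stated ``for convenient reference'' and attributed to Examples 2.10.9--2.10.11 of \citet{vanderVaart23} --- so there is no in-paper argument to measure you against. Your reduction of all three parts to a single Lipschitz-composition preservation principle (essentially Theorem 2.10.6 of that reference) is the right organisation, and your reading of the side conditions --- global Lipschitzness for sums, Lipschitzness only on bounded boxes for products, Lipschitzness on $[\delta,\infty)$ for reciprocals --- correctly explains why each hypothesis appears.

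The gap is in your proof of the principle itself. You deduce the Donsker property of the composite class from a finite uniform entropy (or bracketing) integral inherited from the factors via the covering-number bound. But the hypothesis of the lemma is only that $\mathcal{F}$ and $\mathcal{G}$ are $\mathbb{P}$-Donsker, and the Donsker property implies neither a finite uniform entropy integral nor a finite bracketing integral: these are sufficient conditions, not characterisations. Your argument therefore establishes preservation of two classical sufficient conditions, which is strictly weaker than the lemma and does not cover the setting in which the paper applies it (the classes containing $\hat{\pi}$ and $\hat{\mu}^{0}$ are only assumed Donsker). The correct proof must instead use the characterisation of Donsker classes by total boundedness plus asymptotic equicontinuity with respect to \emph{some} semimetric: equip the product $\mathcal{F}_{1}\times\cdots\times\mathcal{F}_{k}$ with $\rho(f,g)^{2}=\sum_{j}\Vert f_{j}-g_{j}\Vert_{\mathbb{P}}^{2}$, transfer asymptotic $\rho$-equicontinuity from the marginals to the composite process by symmetrisation and the comparison (contraction) inequality for Rademacher processes applied to the Lipschitz map $\phi$, and conclude via the fact that equicontinuity with respect to any totally bounding semimetric suffices. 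Your own observation that $f+g$ does not determine the pair $(f,g)$ is exactly why the argument has to be run on the product index set with the product semimetric; the entropy route does not circumvent that difficulty, it proves a different theorem. A smaller misattribution: in (iii) the condition $\sup_{f\in\mathcal{F}}|\mathbb{P}[f]|<\infty$ is not needed to furnish an envelope (the constant $\delta^{-1}$ already does), but is inherited from the hypotheses of the general Lipschitz theorem, where it enters the equicontinuity and total-boundedness transfer.
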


\section{The sharpest bound for the asymptotic variance of the sample weighted effect}

Recall that the asymptotic variance of $\psi_{swatt}$ is equal to 
\begin{equation*}
    \Vert \dot{\psi}^{Y} + \dot{\psi}^{A}\Vert_{\mathbb{P}}^{2} - \mathbb{P}[A]^{-2}\mathbb{E}\{\pi(X)^{2}\text{var}(Y^{1}-Y^{0} \mid X)\},
\end{equation*}
as shown in Theorem 1. We mentioned that one could simply estimate $\Vert\dot{\psi}^{Y} + \dot{\psi}^{A}\Vert_{\mathbb{P}}^{2}$, the asymptotic variance of $\psi_{actt}$, to use as a simple asymptotically conservative variance estimator. This has the benefit of not requiring any additional estimation of nuisance parameters. Alternatively, we showed using Cauchy-Schwarz that 
\begin{equation*}
    \mathbb{E}[\pi(X)^{2}\{\sigma_{1}(X) - \sigma_{0}(X)\}^{2}] \leq \mathbb{E}\{\pi(X)^{2}\text{var}(Y^{1}-Y^{0} \mid X)\},
\end{equation*}
which gives us a sharper bound on the asymptotic variance but requires conditional variance estimation to operationalize.

The sharpest possible bound on the asymptotic variance of $\psi_{swatt}$ is obtained through the Fr\'echet-Hoeffding upper bound. For each $a \in \{0,1\}$, let $F_{a \mid x}$ be the distribution function of $Y^{a}$ given $X=x$. These are identified by
\begin{equation*}
    F_{a \mid x}(y) = \mathbb{P}(Y^{a} \leq y \mid X=x) = \mathbb{P}(Y \leq y \mid A=a, X=x).
\end{equation*}
Then the Fr\'echet-Hoeffding upper bound for the conditional covariance is
\begin{equation*}
    \text{cov}^{H}(x) = \int_{0}^{1} F_{1\mid x}^{-1}(u)F_{0\mid x}^{-1}(u)du -\mu^{1}(x)\mu^{0}(x) \geq \text{cov}(Y^{1},Y^{0} \mid X=x),
\end{equation*}
which is attained if $Y^{1}$ and $Y^{0}$ are comonotonic given $X=x$; that is, $(Y^{1}, Y^{0}) \mid X=x \sim (F_{1\mid x}^{-1}(U), F_{0\mid x}^{-1}(U))$ for $U \sim U[0,1]$. Consequently,
\begin{equation*}
    \mathbb{E}[\pi(X)^{2}\{\sigma_{1}^{2}(X) + \sigma_{0}^{2}(X) - 2\text{cov}^{H}(X)\}]\leq \mathbb{E}\{\pi(X)^{2}\text{var}(Y^{1}-Y^{0} \mid X)\}.
\end{equation*}
This suggests that we should estimate $\text{cov}^{H}$ to obtain the sharpest possible asymptotic variance estimate for $\psi_{swatt}$. However, this generally appears to require simultaneous quantile regression to obtain plug-in estimates $\hat{F}_{1\mid x}^{-1}(u)$ and $\hat{F}_{0\mid x}^{-1}(u)$ across all $u \in [0,1]$ and all values of $x$. We defer the development of this methodology to future work. An exception is the case of binary outcomes, where the sharpest bound takes a particularly simple form.
\setcounter{proposition}{6}
\setcounter{example}{1}
\begin{proposition}
    For binary outcomes, the lower bound
    \begin{equation*}
        \mathbb{E}[\pi(X)^{2}\{|\mu^{1}(X) - \mu^{0}(X)| - |\mu^{1}(X) - \mu^{0}(X)|^{2}\}] \leq \mathbb{E}\{\pi(X)^{2}\text{var}(Y^{1}-Y^{0} \mid X)\}
    \end{equation*}
    is sharp. Under Assumptions \ref{ass::patt_eff} and \ref{ass::prox_eff}, the lower bound is consistently estimated by
    \begin{equation*}
        \hat{V}_{FH} = \mathbb{P}_{n}[\hat{\pi}(X)^{2}\{|\hat{\mu}^{1}(X) - \hat{\mu}^{0}(X)| - |\hat{\mu}^{1}(X) - \hat{\mu}^{0}(X)|^{2}\}].
    \end{equation*}
\end{proposition}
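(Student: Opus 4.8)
The plan is to handle the two claims in turn: the closed form and sharpness of the lower bound, and then the consistency of $\hat{V}_{FH}$.

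For the bound, the derivation preceding the proposition already gives $\mathbb{E}[\pi(X)^{2}\{\sigma_{1}^{2}(X) + \sigma_{0}^{2}(X) - 2\text{cov}^{H}(X)\}] \leq \mathbb{E}\{\pi(X)^{2}\text{var}(Y^{1}-Y^{0}\mid X)\}$, so it suffices to show that for binary outcomes the integrand on the left equals $\pi(X)^{2}\{|\mu^{1}(X)-\mu^{0}(X)| - |\mu^{1}(X)-\mu^{0}(X)|^{2}\}$ pointwise. I would specialise the Fréchet--Hoeffding formula $\text{cov}^{H}(x) = \int_{0}^{1}F_{1\mid x}^{-1}(u)F_{0\mid x}^{-1}(u)\,du - \mu^{1}(x)\mu^{0}(x)$ to the Bernoulli marginals $F_{a\mid x}$, whose quantile function is $u \mapsto \Ind\{u > 1-\mu^{a}(x)\}$; then $F_{1\mid x}^{-1}(u)F_{0\mid x}^{-1}(u) = \Ind\{u > 1-\min(\mu^{1}(x),\mu^{0}(x))\}$, so $\int_{0}^{1}F_{1\mid x}^{-1}(u)F_{0\mid x}^{-1}(u)\,du = \min(\mu^{1}(x),\mu^{0}(x))$ and $\text{cov}^{H}(x) = \min(\mu^{1}(x),\mu^{0}(x)) - \mu^{1}(x)\mu^{0}(x)$. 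Substituting $\sigma_{a}^{2}(x) = \mu^{a}(x)\{1-\mu^{a}(x)\}$ and simplifying via $a + b - 2\min(a,b) = |a-b|$ and $a^{2}-2ab+b^{2} = |a-b|^{2}$ yields $\sigma_{1}^{2}(x)+\sigma_{0}^{2}(x) - 2\text{cov}^{H}(x) = |\mu^{1}(x)-\mu^{0}(x)| - |\mu^{1}(x)-\mu^{0}(x)|^{2}$, which is the claim. For sharpness I would exhibit a distribution attaining equality: keep the identified conditional laws $F_{a\mid x}$ and the identified propensity score $\pi$, and take $(Y^{1},Y^{0})\mid X=x \sim (\Ind\{U > 1-\mu^{1}(x)\},\Ind\{U > 1-\mu^{0}(x)\})$ for $U \sim U[0,1]$, independently of $A$ given $X$; this is consistent with Assumption \ref{ass::strong_ign_pos}, reproduces the law of $(Y,A,X)$, and makes $\text{cov}(Y^{1},Y^{0}\mid X) = \text{cov}^{H}(X)$ almost surely, so no larger lower bound can hold over all admissible $\mathbb{P}$.

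For consistency, I would first record that since $Y$ is binary, Assumptions \ref{ass::patt_eff}(iv) and \ref{ass::prox_eff}(iii) force $\hat{\mu}^{0}$ and $\hat{\mu}^{1}$ to be uniformly bounded and $\hat{\pi}$ to be bounded away from $0$ and $1$, so the relevant estimator classes may be taken uniformly bounded, with constant (hence integrable) envelopes. Writing $g(p,m_{0},m_{1}) = p^{2}\{|m_{1}-m_{0}| - |m_{1}-m_{0}|^{2}\}$, which is continuous, Lemma \ref{lem::perm_GC} applied to the estimator classes of Assumptions \ref{ass::patt_eff}(ii) and \ref{ass::prox_eff}(i), which are in particular Glivenko--Cantelli, shows $g(\hat{\pi},\hat{\mu}^{0},\hat{\mu}^{1})$ lies in a fixed $\mathbb{P}$-Glivenko--Cantelli class, so $\mathbb{P}_{n}[g(\hat{\pi},\hat{\mu}^{0},\hat{\mu}^{1})] = \mathbb{P}[g(\hat{\pi},\hat{\mu}^{0},\hat{\mu}^{1})] + \p1$. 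It then remains to show $\mathbb{P}[g(\hat{\pi},\hat{\mu}^{0},\hat{\mu}^{1})] = \mathbb{P}[g(\pi,\mu^{0},\mu^{1})] + \p1$, for which I would use an $L_{2}(\mathbb{P})$ perturbation argument: $t \mapsto |t| - t^{2}$ is Lipschitz on any bounded interval, so with $\hat{\theta} = \hat{\mu}^{1}-\hat{\mu}^{0}$ and $\theta = \mu^{1}-\mu^{0}$ one gets $\Vert (|\hat{\theta}| - \hat{\theta}^{2}) - (|\theta| - \theta^{2})\Vert_{\mathbb{P}} \lesssim \Vert \hat{\mu}^{1}-\mu^{1}\Vert_{\mathbb{P}} + \Vert \hat{\mu}^{0}-\mu^{0}\Vert_{\mathbb{P}} = \p1$; together with $\Vert \hat{\pi}^{2}-\pi^{2}\Vert_{\mathbb{P}} \leq 2\Vert \hat{\pi}-\pi\Vert_{\mathbb{P}} = \p1$ and boundedness, the triangle inequality gives $\Vert g(\hat{\pi},\hat{\mu}^{0},\hat{\mu}^{1}) - g(\pi,\mu^{0},\mu^{1})\Vert_{\mathbb{P}} = \p1$, and Lemma \ref{lem::l2} upgrades this to $\mathbb{P}[|g(\hat{\pi},\hat{\mu}^{0},\hat{\mu}^{1}) - g(\pi,\mu^{0},\mu^{1})|] = \p1$. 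Chaining the two displays yields $\hat{V}_{FH} = \mathbb{E}[\pi(X)^{2}\{|\mu^{1}(X)-\mu^{0}(X)| - |\mu^{1}(X)-\mu^{0}(X)|^{2}\}] + \p1$.

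The step I expect to require the most care is sharpness: one must verify that the comonotonic coupling of $(Y^{1},Y^{0})$ given $X$ can be attached to the identified treatment mechanism without violating ignorability or positivity and without altering the distribution of the observables, so that the worst case is genuinely admissible. The algebraic reduction and the consistency argument are then routine, running parallel to the proof of Proposition 4 and the earlier variance-estimator results and relying only on the permanence lemmas for Glivenko--Cantelli classes and the Cauchy--Schwarz bound of Lemma \ref{lem::l2}.
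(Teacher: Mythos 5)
Your proof is correct and follows essentially the same route as the paper's: the same identity $\mathrm{cov}^{H}(x) = \mu^{1}(x)\wedge\mu^{0}(x) - \mu^{1}(x)\mu^{0}(x)$, the same algebraic reduction to $|\mu^{1}-\mu^{0}| - |\mu^{1}-\mu^{0}|^{2}$, and the same Glivenko--Cantelli-plus-$L_{2}$-perturbation argument (via Lemmas \ref{lem::perm_GC} and \ref{lem::l2}) for consistency of $\hat{V}_{FH}$. The only differences are cosmetic: the paper obtains the covariance bound by directly noting $\mathbb{P}(Y^{1}=1,Y^{0}=1\mid X=x)\leq \mu^{1}(x)\wedge\mu^{0}(x)$ rather than integrating the Bernoulli quantile functions, and your explicit construction of the comonotonic coupling compatible with Assumption \ref{ass::strong_ign_pos} is a more careful rendering of the attainability claim the paper asserts in a single sentence.
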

\begin{proof}
    The conditional covariance between $Y^{1}$ and $Y^{0}$ can be written as
    \begin{equation*}
        \text{cov}(Y^{1},Y^{0} \mid X=x) = \mathbb{E}(Y^{1}Y^{0} \mid X=x) - \mu^{1}(x)\mu^{0}(x).
    \end{equation*}
    The first term on the right is bounded above by
    \begin{equation*}
        \mathbb{E}(Y^{1}Y^{0} \mid X=x) = \mathbb{P}(Y^{1} = 1, Y^{0} = 1 \mid X=x) \leq \mu^{1}(x) \wedge \mu^{0}(x).
    \end{equation*}
    This upper bound can always be attained if the only restriction on the joint conditional distribution $(Y^{1}, Y^{0}) \mid X=x$ is the pair $(\mu^{1}(x),\mu^{0}(x))$. Hence,
    \begin{align*}
        \text{var}(Y^{1}-Y^{0} \mid X=x) &\geq \mu^{1}(x)\{1-\mu^{1}(x)\}+\mu^{0}(x)\{1-\mu^{0}(x)\}-2\{\mu^{1}(x) \wedge \mu^{0}(x) - \mu^{1}(x)\mu^{0}(x)\}\\
        &= -\{\mu^{1}(x)-\mu^{0}(x)\}^{2} + \mu^{1}(x)+\mu^{0}(x) - 2\mu^{1}(x) \wedge \mu^{0}(x)\\
        &= -\{\mu^{1}(x)-\mu^{0}(x)\}^{2} + |\mu^{1}(x)-\mu^{0}(x)|.
    \end{align*}
    Plugging this into the expectation completes the proof of the first part.

    Using similar arguments to the proofs of previous propositions, the assumptions imply that $\hat{\pi}^{2}\{|\hat{\mu}^{1} - \hat{\mu}^{0}| - |\hat{\mu}^{1} - \hat{\mu}^{0}|^{2}\}$ takes values in a fixed $\mathbb{P}$-Glivenko-Cantelli class, so
    \begin{equation*}
        \hat{V}_{FH} = \mathbb{P}[\hat{\pi}(X)^{2}\{|\hat{\mu}^{1}(X) - \hat{\mu}^{0}(X)| - |\hat{\mu}^{1}(X) - \hat{\mu}^{0}(X)|^{2}\}] + o_{\mathbb{P}}(1).
    \end{equation*}
    Moreover, the argument in the proof of Proposition 4 can be used to show
    \begin{equation*}
        \Vert \hat{\pi}(\hat{\mu}^{1} - \hat{\mu}^{0}) - \pi(\mu^{1} - \mu^{0})\Vert_{\mathbb{P}} = o_{\mathbb{P}}(1)
    \end{equation*}
    after replacing $(\hat{\sigma}_{1},\hat{\sigma}_{0}, \sigma_{1}, \sigma_{0})$ with $(\hat{\mu}^{1},\hat{\mu}^{0}, \mu^{1}, \mu^{0})$. Using Lemma \ref{lem::l2}, we have
    \begin{align*}
        \mathbb{P}[\hat{\pi}(X)^{2}|\hat{\mu}^{1}(X) - \hat{\mu}^{0}(X)|^{2}] &= \mathbb{P}[\pi(X)^{2}|\mu^{1}(X) - \mu^{0}(X)|^{2}] + o_{\mathbb{P}}(1)\\
        \mathbb{P}[\hat{\pi}(X)^{2}|\hat{\mu}^{1}(X) - \hat{\mu}^{0}(X)|] &= \mathbb{P}[\hat{\pi}(X)\pi(X)|\mu^{1}(X) - \mu^{0}(X)|]+ o_{\mathbb{P}}(1)\\
        &=\mathbb{P}[\pi(X)^{2}|\mu^{1}(X) - \mu^{0}(X)|] + o_{\mathbb{P}}(1).
    \end{align*}
    Combining the above yields the result.
\end{proof}
Consequently, we can use $n^{-1}(\hat{V}_{actt} - \mathbb{P}_{n}(A)^{-1}\hat{V}_{FH})$ as the asymptotically sharp variance estimator for $\psi_{swatt}$.

\section{Additional comparisons and estimands}

\subsection{Comparisons between the sample weighted effect and the literal estimands}

In Figure 1 of the main text, it is indicated that there is no ordering between $\psi_{swatt}$ and any of $\{\psi_{catt}, \psi_{matt},\psi_{satt}\}$. We justify this assertion with the following examples.

\begin{example}[Asymptotic variance of $\psi_{swatt}$ $<$ asymptotic variance of $\psi_{matt}$]
    For arbitrary $\mu^{1}$, set $\mu^{1} - \mu^{0}$ to be an arbitrary constant, so $\dot{\psi}^{A} = 0$ $\mathbb{P}$-almost surely. Also let $Y^{1} - \mu^{1}(X)$ have any distribution such that $\text{var}(Y^{1} \mid X) > 0$ on a set of positive $\mathbb{P}$-probability and set $Y^{0} - \mu^{0}(X) = \mu^{1}(X) - Y^{1}$. Then
    \begin{equation*}
        \mathbb{E}\{\pi(X)^{2}\text{var}(Y^{1}-Y^{0} \mid X)\} = 4\mathbb{E}\{\pi(X)^{2}\text{var}(Y^{1}\mid X)\},
    \end{equation*}
    and
    \begin{align*}
        \Vert \dot{\psi}^{Y} + \dot{\psi}^{A}\Vert_{\mathbb{P}}^{2} - \mathbb{P}[A]^{-2}\mathbb{E}\{\pi(X)^{2}\text{var}(Y^{1}-Y^{0} \mid X)\} &= \mathbb{P}[A]^{-2}\mathbb{P}\left[\left\{\pi(X) + \frac{\pi(X)^{2}}{1-\pi(X)}-4\pi(X)^{2}\right\}\text{var}(Y^{1}\mid X)\right] \\
        &= \mathbb{P}[A]^{-2}\mathbb{P}\left[\frac{\pi(X)\{1-2\pi(X)\}^{2}}{1-\pi(X)}\text{var}(Y^{1}\mid X)\right].
    \end{align*}
    Compare this with the asymptotic variance of $\psi_{matt}$:
    \begin{equation*}
        \Vert \dot{\tau}^{Y}\Vert_{\mathbb{P}}^{2} = \mathbb{P}[A]^{-2}\mathbb{P}\left[\frac{\pi(X)^{2}}{1-\pi(X)}\text{var}(Y^{1}\mid X)\right].
    \end{equation*}
    A sufficient condition for the asymptotic variance of $\psi_{swatt}$ to be strictly smaller is:
    \begin{equation*}
        \{1-2\pi(X)\}^{2} < \pi(X)
    \end{equation*}
    with $\mathbb{P}$-probability 1. This is equivalent to $\pi(X) > 1/4$ with $\mathbb{P}$-probability 1.
\end{example}

\begin{example}[Asymptotic variance of $\psi_{swatt}$ $>$ asymptotic variances of $\psi_{catt}$ and $\psi_{satt}$]
    Suppose $\mu^{1},\mu^{0}$ are such that $\mathbb{E}[(\mu^{1}-\mu^{0}-\psi_{patt})^{2}] > 0$, i.e. the CATE function is not $\mathbb{P}$-almost surely constant. Also let $Y^{1} - Y^{0} = \mu^{1}(X) - \mu^{0}(X)$, so the asymptotic variance of $\psi_{swatt}$ is $\Vert \dot{\psi}^{Y} + \dot{\psi}^{A}\Vert_{\mathbb{P}}^{2}$ while the asymptotic variance of $\psi_{catt}$ is just $\Vert \dot{\psi}^{Y}\Vert_{\mathbb{P}}^{2}$. Then
    \begin{align*}
        \Vert \dot{\psi}^{A}\Vert^{2}_{\mathbb{P}} &= \mathbb{P}[A]^{-2}\mathbb{E}[(A-\pi)^{2}(\mu^{1}-\mu^{0}-\psi_{patt})^{2}]\\
        &= \mathbb{P}[A]^{-2}\mathbb{E}[\mathbb{E}\{(A-\pi)^{2}\mid X\}(\mu^{1}-\mu^{0}-\psi_{patt})^{2}]\\
        &= \mathbb{P}[A]^{-2}\mathbb{E}[\pi(1-\pi)(\mu^{1}-\mu^{0}-\psi_{patt})^{2}]\\
        &\geq \delta^{2} \mathbb{P}[A]^{-2}\mathbb{E}[(\mu^{1}-\mu^{0}-\psi_{patt})^{2}]\\
        &> 0.
    \end{align*}
    Thus, the asymptotic variance of $\psi_{catt}$ is strictly smaller than that of $\psi_{swatt}$.

    If we further let $Y^{1} = \mu^{1}(X)$ and $Y^{0} = \mu^{0}(X)$, then the asymptotic variance of $\psi_{satt}$ is also just $\Vert \dot{\psi}^{Y}\Vert_{\mathbb{P}}^{2}$, and we deduce from the above that $\psi_{satt}$ has strictly smaller asymptotic variance as well.
\end{example}

\subsection{Additional estimands}

Recall from the main text that the population effect can be written as
\begin{equation*}
    \psi_{patt} = \mathbb{E}(Y \mid A = 1) - \tau,
\end{equation*}
where $\tau = \mathbb{E}(Y^{0} \mid A=1) = \mathbb{E}\{\mu^{0}(X) \mid A = 1\}$. Sample variants could be created by replacing the first term $\mathbb{E}(Y \mid A = 1)$ by any of
\begin{equation*}
    \left\{\frac{\mathbb{P}_{n}[\pi\mu^{1}]}{\mathbb{P}_{n}[\pi]},\frac{\mathbb{P}_{n}[\pi Y^{1}]}{\mathbb{P}_{n}[\pi]},\frac{\mathbb{P}_{n}[A\mu^{1}]}{\mathbb{P}_{n}[A]},\frac{\mathbb{P}_{n}[AY]}{\mathbb{P}_{n}[A]}\right\}
\end{equation*}
and the second term $\tau$ by any of 
\begin{equation*}
    \left\{\frac{\mathbb{P}_{n}[\pi\mu^{0}]}{\mathbb{P}_{n}[\pi]},\frac{\mathbb{P}_{n}[\pi Y^{0}]}{\mathbb{P}_{n}[\pi]},\frac{\mathbb{P}_{n}[A\mu^{0}]}{\mathbb{P}_{n}[A]},\frac{\mathbb{P}_{n}[AY^{0}]}{\mathbb{P}_{n}[A]}\right\}.
\end{equation*}
This includes all of the sample variants studied in the main text; the remaining combinations appear to lack practical motivations or utilities.

We highlight two additional estimands that provide some interesting context. It is perhaps appealing to consider 
\begin{equation*}
    \psi_{\tau} = \frac{\mathbb{P}_{n}[AY]}{\mathbb{P}_{n}[A]} - \tau,
\end{equation*}
since the problem now reduces to estimating $\tau$. As expected, we have
\begin{equation*}
    n^{1/2}(\hat{\psi} - \psi_{\tau}) \rightarrow \mathcal{N}(0,\Vert \dot{\tau}\Vert_{\mathbb{P}}^{2})
\end{equation*}
using a similar analysis to before. But it is in fact possible for $\Vert \dot{\tau}\Vert_{\mathbb{P}}$ to exceed $\Vert \dot{\psi}\Vert_{\mathbb{P}}$, as illustrated by the following example.
\begin{example}
    Set $\mu^{1} - \mu^{0}$ to be an arbitrary constant, so that $\dot{\psi}^{A} = \dot{\psi}^{X} =0$ with $\mathbb{P}$-probability 1. Then \begin{align*}
        \Vert \dot{\psi}\Vert^{2}_{\mathbb{P}} &= \Vert\dot{\tau}^{Y}\Vert_{\mathbb{P}}^{2} + \mathbb{P}(A)^{-2}\text{var}\{A(Y^{1} - \mu^{1})\}\\
        \Vert \dot{\tau}\Vert^{2}_{\mathbb{P}} &= \Vert\dot{\tau}^{Y}\Vert_{\mathbb{P}}^{2} + \mathbb{P}(A)^{-2}\text{var}\{A(\mu^{0} - \tau)\}.
    \end{align*}
    So $\Vert \dot{\psi}\Vert_{\mathbb{P}} < \Vert \dot{\tau}\Vert_{\mathbb{P}}$ if and only if $\text{var}\{A(Y^{1} - \mu^{1})\} < \text{var}\{A(\mu^{0} - \tau)\}$, e.g. if $Y^{1} = \mu^{1}(X)$, then it is sufficient to have $\text{var}\{A(\mu^{0} - \tau)\} > 0$. Essentially, scenarios like the above can occur because $\mu^{1} - \mu^{0}$ may be less variable than $\mu^{0}$ on its own. 
\end{example}

Finally, it is natural to investigate the sample variant that can be estimated most precisely. This can be deduced by looking directly at the form of $\dot{\psi}$. Define
\begin{equation*}
    \Tilde{\psi} = \frac{\mathbb{P}_{n}\left[\left\{A - \frac{\pi(1-A)}{1-\pi}\right\}(Y - \mu^{0})\right]}{\mathbb{P}_{n}(A)} = \psi_{patt} + \frac{\mathbb{P}(A)}{\mathbb{P}_{n}(A)}\mathbb{P}_{n}(\dot{\psi}).
\end{equation*}
\begin{proposition}
    Under Assumption \ref{ass::patt_eff}, $n^{1/2}(\hat{\psi} - \Tilde{\psi}) \rightarrow 0$ in weak convergence.
\end{proposition}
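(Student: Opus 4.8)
The plan is to combine the asymptotic linearity from Proposition \ref{prop::patt_eff} with the explicit identity for $\Tilde{\psi}$ already recorded in its definition, and then observe that the two asymptotically linear expansions differ only by a term that vanishes in probability. First I would invoke Proposition \ref{prop::patt_eff} to write $n^{1/2}(\hat{\psi} - \psi_{patt}) = n^{1/2}\mathbb{P}_{n}(\dot{\psi}) + o_{\mathbb{P}}(1)$, using $\mathbb{P}(\dot{\psi}) = 0$. Subtracting the identity $\Tilde{\psi} = \psi_{patt} + \{\mathbb{P}(A)/\mathbb{P}_{n}(A)\}\mathbb{P}_{n}(\dot{\psi})$ then gives
\begin{equation*}
    n^{1/2}(\hat{\psi} - \Tilde{\psi}) = n^{1/2}\mathbb{P}_{n}(\dot{\psi})\left(1 - \frac{\mathbb{P}(A)}{\mathbb{P}_{n}(A)}\right) + o_{\mathbb{P}}(1).
\end{equation*}

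It then remains to argue that the leading term on the right is $o_{\mathbb{P}}(1)$, which is a product-of-rates argument. Since $\Vert\dot{\psi}\Vert_{\mathbb{P}} < \infty$ by Proposition \ref{prop::patt_eff} and $\dot{\psi}$ is the mean-zero efficient influence function, the central limit theorem yields $n^{1/2}\mathbb{P}_{n}(\dot{\psi}) = O_{\mathbb{P}}(1)$; and since $\mathbb{P}(A) = \mathbb{P}(\pi(X)) \geq \delta > 0$ under positivity, the weak law of large numbers together with the continuous mapping theorem gives $\mathbb{P}(A)/\mathbb{P}_{n}(A) \to 1$ in $\mathbb{P}$-probability, so $1 - \mathbb{P}(A)/\mathbb{P}_{n}(A) = o_{\mathbb{P}}(1)$ (indeed $O_{\mathbb{P}}(n^{-1/2})$ by Lemma \ref{lem::swap}, though the cruder rate already suffices). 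The product of an $O_{\mathbb{P}}(1)$ and an $o_{\mathbb{P}}(1)$ sequence is $o_{\mathbb{P}}(1)$, hence $n^{1/2}(\hat{\psi} - \Tilde{\psi}) = o_{\mathbb{P}}(1)$, which converges in $\mathbb{P}$-probability---and therefore weakly---to the constant $0$.

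There is no serious obstacle here. Unlike the estimands studied earlier, $\Tilde{\psi}$ is built from the \emph{true} nuisance functions $\pi$ and $\mu^{0}$, so no empirical-process (Donsker or Glivenko-Cantelli) machinery is needed, and the only ingredient beyond Proposition \ref{prop::patt_eff} is the elementary fact that $\mathbb{P}_{n}(A)$ is consistent for a strictly positive limit. The one point worth spelling out is the displayed identity $\Tilde{\psi} = \psi_{patt} + \{\mathbb{P}(A)/\mathbb{P}_{n}(A)\}\mathbb{P}_{n}(\dot{\psi})$ itself: it follows by rewriting $(A - \pi)/(1-\pi)$ as $A - \pi(1-A)/(1-\pi)$ and recognising $\mathbb{P}(A)\dot{\psi}$ inside the numerator of $\Tilde{\psi}$. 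Given that this identity is already asserted in the statement, the remaining argument is a one-line Slutsky-type computation.
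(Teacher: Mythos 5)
Your proof is correct and follows essentially the same route as the paper: both start from the asymptotic linearity in Proposition \ref{prop::patt_eff}, reduce to the term $n^{1/2}\{1-\mathbb{P}(A)/\mathbb{P}_{n}(A)\}\mathbb{P}_{n}(\dot{\psi})$, and kill it by an $O_{\mathbb{P}}\times o_{\mathbb{P}}$ (Slutsky-type) argument, the paper citing Lemma \ref{lem::swap} where you invoke the law of large numbers plus the central limit theorem. No gaps.
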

\begin{proof}
    Using Proposition 1, we can start with $n^{1/2}(\hat{\psi} - \psi_{patt}) = n^{1/2}\mathbb{P}_{n}(\dot{\psi}) + o_{\mathbb{P}}(1)$, from which
    \begin{equation*}
        n^{1/2}(\hat{\psi} - \Tilde{\psi}) = n^{1/2}\left\{1- \frac{\mathbb{P}(A)}{\mathbb{P}_{n}(A)}\right\}\mathbb{P}_{n}(\dot{\psi})+ o_{\mathbb{P}}(1).
    \end{equation*}
    The result follows from Lemma \ref{lem::swap}.
\end{proof}

\end{document}